 \definecolor{BLACK}{gray}{0}
 \definecolor{WHITE}{gray}{1}
 \definecolor{RED}{rgb}{1,0,0}
 \definecolor{GREEN}{rgb}{0,1,0}
 \definecolor{BLUE}{rgb}{0,0,1}
 \definecolor{CYAN}{cmyk}{1,0,0,0}
 \definecolor{MAGENTA}{cmyk}{0,1,0,0}
 \definecolor{YELLOW}{cmyk}{0,0,1,0}
\newtheoremstyle{nobold}
  {\topsep}
  {\topsep}
  {}
  {}
  {\itshape}
  {.}
  {.5em}
  {\thmname{#1}\thmnumber{ #2}\thmnote{ (#3)}}
\theoremstyle{nobold}
\theoremstyle{nobold}
\providecommand{\lemmaname}{Lemma}
\providecommand{\definitionname}{Definition}
\providecommand{\propositionname}{Proposition}
\newcommand{\nc}{\newcommand}
\nc{\rnc}{\renewcommand}
\definecolor{myurlcolor}{rgb}{0,0,0.7}
\newcommand{\tr}{{\operatorname{Tr\,}}}
\def\ket#1{| #1 \rangle}
\def\bra#1{\langle  #1 |}
\def\braket#1{\langle  #1 \rangle}
\def\proj#1{| #1 \rangle\!\langle #1 |}
\nc{\ketbra}[2]{|#1\rangle\!\langle#2|}
\newcommand{\mean}[1]{\left\langle{#1}\right\rangle}
\DeclareMathOperator{\Tr}{\mathrm{Tr}}
\newcommand{\haH}
\newtheorem{theorem}{Theorem}
\newtheorem{observation}[theorem]{Observation}
\newtheorem{corollary}[theorem]{Corollary}
\newcommand{\norm}[1]{\left\| #1 \right\|}
\newcommand{\av}[1]{\left\langle #1 \right\rangle}
\newcommand{\dd}{\textrm{d}} 
\definecolor{orange}{RGB}{255,127,0}
\newcommand{\adop}{\hat a^{\dagger}}
\newcommand{\aop}{\hat a}
\newcommand{\be}{\begin{equation}}
\newcommand{\ee}{\end{equation}}
\newcommand{\dagg}{^{\dag}}
\newcommand{\im}{\text{i}}
\newcommand{\eqnref}[1]{Eq.~\eqref{#1}}
\newcommand{\eqnsref}[1]{Eqs.~\eqref{#1}}
\newcommand{\figref}[1]{Fig.~\ref{#1}}
\newcommand{\secref}[1]{Sec.~\ref{#1}}
\newcommand{\tabref}[1]{Table~\ref{#1}}
\begin{document}
\title{Bounds on the capacity and power of quantum batteries}
\author{Sergi Juli\`a-Farr\'e}
\email{sergi.julia@icfo.eu}
\affiliation{ICFO -- Institut de Ci\`encies Fot\`oniques, The Barcelona Institute of Science and Technology, ES-08860 Castelldefels, Spain}

\author{Tymoteusz Salamon}
\affiliation{ICFO -- Institut de Ci\`encies Fot\`oniques, The Barcelona Institute of Science and Technology, ES-08860 Castelldefels, Spain}
\affiliation{Institut f{\"u}r Teoretische Physik, Universit{\"a}t Ulm, Albert-Einstein-Allee 11, 89081 Ulm, Germany}

\author{Arnau Riera}
\affiliation{ICFO -- Institut de Ci\`encies Fot\`oniques, The Barcelona Institute of Science and Technology, ES-08860 Castelldefels, Spain}
\affiliation{Max-Planck-Institut f\"ur Quantenoptik, D-85748 Garching, Germany}

\author{Manabendra N. Bera}
\email{mnbera@gmail.com}
\affiliation{ICFO -- Institut de Ci\`encies Fot\`oniques, The Barcelona Institute of Science and Technology, ES-08860 Castelldefels, Spain}
\affiliation{Max-Planck-Institut f\"ur Quantenoptik, D-85748 Garching, Germany}
\affiliation{Department of Physical Sciences, Indian Institute of Science Education and Research (IISER), Mohali, Punjab 140306, India}

\author{Maciej Lewenstein}
\affiliation{ICFO -- Institut de Ci\`encies Fot\`oniques, The Barcelona Institute of Science and Technology, ES-08860 Castelldefels, Spain}
\affiliation{ICREA, Pg.~Lluis Companys 23, ES-08010 Barcelona, Spain}

\begin{abstract}
Quantum batteries, composed of quantum cells, are expected to outperform their classical analogs. The origin of such advantages lies in the role of quantum correlations, which may arise during the charging and discharging processes performed on the battery. In this theoretical work, we introduce a systematic characterization of the relevant quantities of quantum batteries, i.e., the \emph{capacity} and the \emph{power}, in relation to such correlations. For these quantities, we derive upper bounds for batteries that are a collection of non-interacting quantum cells with fixed Hamiltonians. The capacity, that is, a bound on the stored or extractable energy, is derived with the help of the energy-entropy diagram, and this bound is respected as long as the charging and discharging processes are entropy preserving. While studying power, we consider a geometric approach for the evolution of the battery state in the energy eigenspace of the battery Hamiltonian. Then, an upper bound for power is derived for arbitrary charging process, in terms of the  Fisher information and the energy variance of the battery. The former quantifies the speed of evolution, and the latter encodes the non-local character of the battery state. Indeed, due to the fact that the energy variance is bounded by the multipartite entanglement properties of batteries composed of qubits, we establish a fundamental bound on power imposed by quantum entanglement. We also discuss paradigmatic models for batteries that saturate the bounds both for the stored energy and the power. Several experimentally  realizable quantum batteries, based on integrable spin chains, the Lipkin-Meshkov-Glick and the Dicke models, are also studied in the light of these newly introduced bounds.
\end{abstract}

\maketitle
\section{Introduction}
With the decline of fossil fuels, there is a constant search for alternative energy sources. In this context, the growth of renewable energies has boosted the urgency for better energy storage devices, that is, batteries. They are often made-up of classical ingredients, be it chemical or cell-based, and are primary devices, where energy can be remotely stored and accessed deterministically. The reasons for storing energy are twofold. First, while energy disposal is aimed to be at will, renewable sources produce energy discontinuously, e.~g., solar panels do not generate energy at night. Second, in many cases, the power provided by such sources is not enough to perform some highly consuming tasks, like running a car. 

In recent years, there has been a growing interest in the study of the advantages that quantum effects could bring into the problem of energy storage \cite{Alicki13, 1367-2630-17-7-075015, Campaioli2017, PhysRevLett.120.117702, PhysRevA.97.022106,  PhysRevLett.122.047702, PhysRevB.98.205423, PhysRevB.99.035421, PhysRevLett.122.210601}, leading to the concept of \emph{quantum batteries}. They are intrinsically quantum devices made of quantum cells that can interact, and thus exploit collective quantum properties, in order to perform the task of energy storage. The study in Ref.~\cite{Alicki13}, for the first time, suggested that quantum entanglement can boost the extractable stored energy from an ensemble of quantum batteries. 
Later, it was shown that quantum entanglement is not absolutely necessary to increase the extractable energy, and classical correlations are enough \cite{PhysRevLett.111.240401}. Also, the presence of correlations, in the initial and final state of a quantum battery, is detrimental to its storage capacity. 

On the contrary, when looking at the power of a quantum battery, i.e., the rate at which energy can be stored or extracted, quantum correlations in the intermediate states can lead to an enhancement, usually denoted as a quantum speed up. In this line, the correspondence between quantum entanglement and the power of a quantum battery has led to many interesting studies, see for example \cite{Campaioli2017, PhysRevLett.120.117702, PhysRevA.97.022106, 1367-2630-17-7-075015, PhysRevLett.122.047702, PhysRevB.98.205423, PhysRevB.99.035421, PhysRevLett.122.210601}. The role of \emph{nonlocal} charging process has been studied theoretically in Refs.~\cite{Campaioli2017,1367-2630-17-7-075015}, and for experimentally realizable quantum batteries in \cite{PhysRevLett.120.117702, PhysRevA.97.022106}.  More precisely, one of the features that has been explored is the achievement of a superlinear rate of charging by means of collective quantum effects. In these cases, for a battery with $N$ quantum cells, the total power would scale as $N\sqrt{N}$, instead of (linear) $N$. One trivially gets the linear scaling in the case of independent charging of the quantum cells. A review on the recent progress can be found in Ref.~\cite{reviewbookQB}.

In this work, we deem to consider all these important aspects to characterize a quantum battery, adhering to the traditional definitions of the stored energy and the power. We study the capacity of a quantum battery, i.e., a bound on the storable (extractable) energy, under entropy-preserving processes. Notice that the scenario of an entropy-constrained capacity was introduced in Ref.~\cite{Allahverdyan_2004} for the study of work extraction in finite quantum systems. Here, we extend these results for the study of quantum batteries. In particular, we introduce the energy-entropy diagram for quantum states, and we show its usefulness to visualize both the limitations on extractable and storable work. We also emphasize that many of the previous works about quantum batteries assume only the expectation value of the battery Hamiltonian, i.e., the first moment. However, we know that this expectation value does not always imply accessible energy, as a non-vanishing variance implies lower usability of that energy \cite{Friis2018precisionwork}.  Therefore it is important to study higher moments of the energy, in order to properly characterize a quantum battery.  

In the case of power, one of the main results is a bound on the rate at which energy can be deposited (extracted) in a quantum battery in a charging (discharging) process, obtained by means of a quantum geometrical approach. It is valid for arbitrary charging (discharging) processes. The bound is derived in terms of the energy variance of the battery and the Fisher information (or speed of evolution) in the eigenspace of the battery Hamiltonian (denoted as $I_E$). On the one hand, the energy variance of the battery can be related to non-local properties of its quantum state, hence connecting entanglement and power. In the case of qubit-based batteries, this connection leads to another central result of this work, that is, a mathematical inequality bounding the power by the amount of entanglement generated between the cells. On the other hand, $I_E$ signifies the rate of change of the battery state in the energy eigenspace. This gives a better bound on power compared to the case in which one considers the traditional speed of evolution of the battery state in Hilbert space (i.e., the quantum Fisher information, $I_Q$). The reason is related to the fact that there can exist initial and final time-evolved states that are infinitesimal close (or identical) in energy, but orthogonal (and thus perfectly distinguishable) in Hilbert space. The speed of evolution in Hilbert space would then be non-zero, while power would be zero.  Therefore, when studying power, it is necessary to consider a speed of evolution based on a notion of distinguishability between states that is directly connected to the difference in their energetic distributions.

Furthermore, we use the derived bounds for the storable energy and the power to systematically analyze the paradigmatic cases that are often studied the literature, and also more realistic models of quantum batteries, that include integrable spins chains, and two models based on the LMG and the Dicke Hamiltonian. 
For such a study, we restrict to charging processes in which the battery is initially in a pure quantum state. This leads to an entropy-free capacity given simply by the difference between the ground state and the highest-excited energies of the battery Hamiltonian. We then analyze which portion of this capacity is stored in the battery due to the charging dynamics, and its scaling with the number of cells $N$. In the case of power, we also analyze the saturation of the bound in terms of the quantities appearing in it and their scaling with $N$.

The main results for this part are the saturation of our bounds in the paradigmatic cases, the unveil of the role of entanglement in the power of all these models, and the challenging of the appearance of a quantum speed up in the realistic models that we consider.

The article is organized as follows. In Sec.~\ref{sec:QuBatProp}, we outline the important properties of a quantum battery. Section \ref{sec:BoundCapacity} gives the expression for capacity exploiting the energy-entropy diagram.   
In Sec.~\ref{sec:BoundPower}, a bound on the power of a generic battery is derived, in terms of the energy variance and the Fisher information in energy eigenspace. In the case of qubit-based quantum batteries, we introduce a bound on power in terms of the $k$-qubit entanglement of the battery state. Section \ref{sec:ParadigmaticExamples} outlines paradigmatic spin-$\frac{1}{2}$(qubit) models that saturate the derived bounds. We study various realistic models in light of these bounds in Sec.~\ref{sec:BoundsSpinModel}. Finally, we dedicate Sec.~\ref{sec:Discussion} for conclusions.

\section{Quantum Batteries and their properties \label{sec:QuBatProp}}

In this section, we make a brief outline of the properties that one needs to consider to make an assessment of a good battery. A quantum battery is a physical system, where energy can be stored for a relatively long time and extracted whenever it is convenient. It is modeled by a Hamiltonian $H_B$, so that its internal energy, which depends on its state $\rho$, is given by
\begin{equation}
E(\rho)\coloneqq \Tr (\rho H_B)\, .
\end{equation}
In order to give a further insight into the problem, in this work, we assume that a battery is composed of independent noninteracting quantum cells, with the Hamiltonian
\begin{equation}\label{eq:battery_ham}
H_B= \sum_{j=0}^{N-1} h_j,
\end{equation}
where $ h_j$ is the Hamiltonian of the $j$-th quantum cell, and $N$ is the total number of them. Notice that the form of the battery Hamiltonian in \eqnref{eq:battery_ham} implies that we are considering an additive nature of the battery stored energy, i.e., the total energy is obtained as the sum of individual energies stored in each battery cell.

The process of charging (or discharging) a battery is a physical process $\Gamma_t$ that leads to the battery state $\rho(t)\coloneqq \Gamma_t(\rho(t_0))$ at each instant of time $t$. The main features that we study in this work are the following.

\

\noindent {\bf Stored and extracted energy} -- For a given dynamical charging (discharging) process $\Gamma_t$ of a quantum battery, that is initially in the state $\rho(t_0)\coloneqq\rho_0$, the stored (extracted) energy $E^s_{\rho_0}\ (E^e_{\rho_0})$ is the maximum amount of energy that the battery absorbs (delivers). They are defined by 
\begin{equation}\label{def:capacity}
\begin{split}
E^s_{\rho_0}[\Gamma_t]&\coloneqq \max_t E\left(\rho(t)\right)-E\left(\rho_0\right),\\
E^e_{\rho_0}[\Gamma_t]&\coloneqq E\left(\rho_0\right)-\min_t E\left(\rho(t)\right).\\
\end{split}
\end{equation}
Notice that, in the above definitions, the energies are a functional of the process, and the initial battery state (i.e., we do not assume that the battery is initially fully empty nor charged). Instead, we are interested in the energy change of the battery given an arbitrary initial state and a dynamical process $\Gamma_t$, belonging to a certain set $\mathcal{S}$ of control operations. Within this set, we also define the capacity $C_\mathcal{S}$ of the battery as its energetic amplitude, that is 
\begin{equation}
C_{\mathcal{S}} \coloneqq \max_{\Gamma_t \in \mathcal{S}} E^s_{\rho_0}[\Gamma_t] + \max_{\ \Gamma_t \in \mathcal{S}} E^e_{\rho_0}[\Gamma_t].
\end{equation}
In the present work, we will mainly focus in entropy-preserving operations, which can be connected to unitary operations for linear maps or in the many-copy limit, as discussed in Sec.~\ref{sec:BoundCapacity}. Notice then, that the capacity $C_\mathcal{S}$ will depend on the initial state $\rho_0$ trough the initial entropy $S(\rho_0)$. Physically, this means that, while given a battery Hamiltonian $H_B$ the ideal capacity is trivially the difference between the ground state and maximally excited state, one may not be able to bring the battery in an exact pure state, thus reducing the effective capacity under entropy-preserving/unitary operations \cite{Allahverdyan_2004}.

The choice of this set of control operations is motivated by the fact that batteries are not engines that convert heat into work. Instead, they are supposed to operate in isolation from the environment and store or supply energy when demand arises. Also notice that within entropy-preserving operations the stored energy, as defined in Eq.~\eqref{def:capacity}, coincides with the (thermodynamical) work \cite{Bera2019thermodynamicsas}. If we go beyond this assumption, we have to consider work injection instead of internal energy, as they migh not coincide.  In this context, notice that recent works directly addressed the issue of non-unitary processes by considering dissipative charging protocols \cite{PhysRevB.99.035421,PhysRevLett.122.210601} or the impact of correlations in the extractable work from the battery \cite{PhysRevLett.122.047702}.
It is also important the fact that even when restricting to entropy-preserving or unitary operations, we only consider the first moment of the Hamiltonian. However, a nonzero variance in the stored energy worsens its deterministic extraction, as in a discharging process it might not be possible to exactly reverse the time evolution generated during the charging.

In this work, we will not take this variance into account to derive our bound for storage. The reason is that for a large number of cells $N$ the energy variance (second moment) typically scales as $1/\sqrt{N}$, and thus vanishes in the thermodynamic limit. On the contrary, we will analyze the impact of this variance when studying the charging processes of different particular models of a quantum battery. An enhanced energy variance in the battery may appear during the time evolution. In these cases, it is important to see what percentage of it remains in the final state, and how does it decay in the large $N$ limit. \\

\noindent {\bf Power} -- How quickly a battery can be charged (or discharged) depends on its power. It is quantified by the rate of energy flow in the battery during charging (or discharging), that is
\begin{equation}
P(t)\coloneqq \frac{\dd}{\dd t} E\left(\rho(t)\right)\, .
\end{equation}
Notice that as in the capacity definiton, the above definition of Power coincides with the rate of Work injection only in the case of entropy-preserving transformations of the battery state, that is when we considered an input of energy in an isolated system, without heat exchange. This is also equivalent to an adiabatic process where the system undergoes a transformation without an interaction with the environment. For a given charging process of a duration $\Delta t = t_f-t_0$, the \emph{average power} will be
\begin{equation}
\langle P \rangle_{\Delta t}= \frac{E(\rho(t_f))-E(\rho(t_0))}{\Delta t}\, .
\end{equation}
From a geometric point of view, there are two relevant properties when studying how fast a charging process can evolve a battery state from $\rho(t_0)$ to $\rho(t_f)$. Within a notion of distance (distinguishability)  between quantum states, the amount of time spent in the process is affected by  both the rate at which the battery state $\rho$ changes during the time interval $\Delta t$ (speed of state evolution), and how smart is the path taken, in terms of the total length (trajectory of evolution). These two aspects are illustrated in Fig.~\ref{fig:GeometryP}.
\begin{figure}[h!]
\includegraphics[width=0.8\columnwidth]{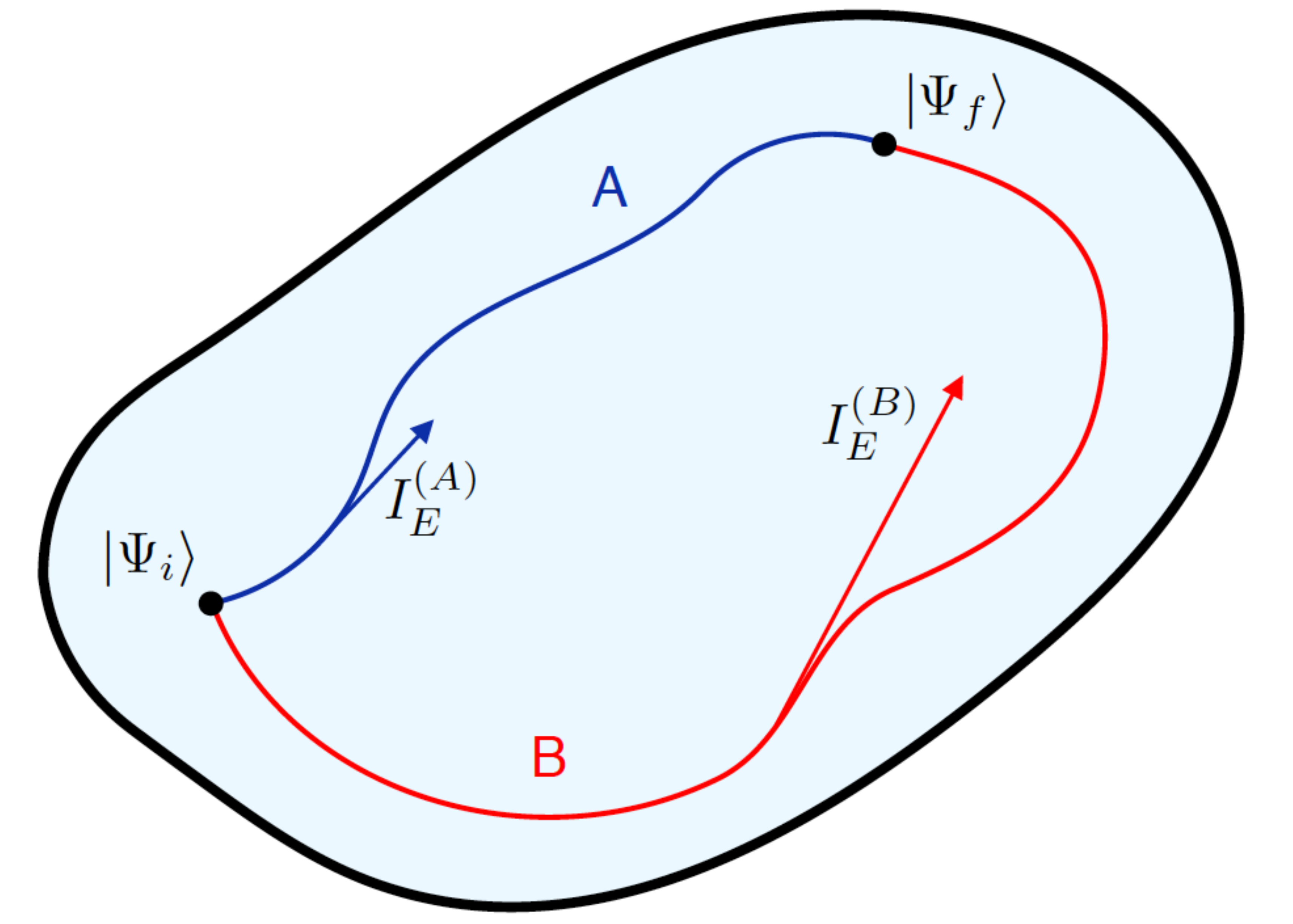}
\caption{Schematic of a charging (discharging) process in the quantum state space. The trajectories represent two different charging processes $A$ (blue) and $B$ (red) in which the initial and final states ($\ket{\Psi_i}$ and $\ket{\Psi_f}$ respectively) coincide. The process $A$ undergoes a path with path length $L_A$ at a speed $I_E^{(A)}$. Similarly, the process $B$ traverses the path length $L_B$ and the speed $I_E^{(B)}$. The time required ($t_A$ and $t_B$ for the two processes) to reach to the final state, from the initial one, depends both on the path length and the speed. For instance, the charging process requires shorter time, i.e. $t_A < t_B$, although the speed of the process $B$ is larger,  $I_E^{(A)}<I_E^{(B)}$. This is because path length of $B$ is also larger, $L_A < L_B$.}
\label{fig:GeometryP}
\end{figure}
Notice that the concepts of speed and trajectory of evolution are highly dependent on the definition of distinguishability between battery states. For instance, if one considers distinguishability of states in Hilbert space, an initial state could evolve to a perfectly distinguishable final state, but with the same energy distribution as the initial one. In this case, the speed of evolution would be nonzero, without a change in the energetic properties of the state. Instead, in this work, we use the concept of distinguishability in the eigenspace of the battery Hamiltonian and define the speed of evolution accordingly.

In the following, we aim to understand the limits that quantum mechanics imposes on the storage, the power, and energy variance of any energy storage device. More specifically, we are interested in understanding how these quantities scale with the number of cells $N$. In doing so, we restrict ourselves to entropy-preserving evolutions to perform charging and discharging processes on the batteries.

\section{Bounds on the stored and extracted energies \label{sec:BoundCapacity}}
Here we study the capacity of a battery, given by the maximum amount of energy that it can store or that can be extracted from it. We restrict our analysis to entropy-preserving processes, which reduce to unitary operations in the case of linear maps acting on a single copy of a state  \cite{Hulpke2006}. In the many-copy scenario, entropy-preserving operations are understood as effective local operations generated by a global unitary acting on asymptotically many copies of the system \cite{Sparaciari16, Bera2019thermodynamicsas} and a small ancilla $\eta$. That is, for any two states $\rho$ and $\sigma$ with equal entropies, $S(\rho)=S(\sigma)$, (here we consider the von Neumann entropy $S(\rho)=-\tr \rho \log_2 \rho$) there exists an additional ancilla system of $O(\sqrt{N\log_2 N})$ qubits and a global unitary $U$ such that
\begin{equation}\label{eq:entropy-pres-micro}
\lim_{N\to \infty}
\|\tr_{\textrm{anc}}\left(U \rho^{\otimes N}\otimes\eta U^{\dagger}\right)-\sigma^{\otimes N}\|_1=0  \, ,
\end{equation}
where the partial trace is performed over the ancillary qubits and $\|\cdot\|_1$ is one-norm. In the finite case, where one cannot apply the limit appearing in Eq.~\eqref{eq:entropy-pres-micro}, we will also discuss how the operational interpretation of our analysis changes.

\subsection{Energy-entropy diagram}
In order to understand the limitations on the energy that a system can store, it is useful to introduce the \emph{energy-entropy diagram}, as depicted in Fig. \ref{fig:energy-entropy-diagram}. Given a system described by a time-independent Hamiltonian $H$, a state $\rho$ is represented in the energy-entropy diagram by a point with coordinates $x_\rho\coloneqq(E(\rho), S(\rho))$ (see Fig.~\ref{fig:energy-entropy-diagram}). All physical states reside in a region that is lower bounded by the horizontal axis (i.e., $S=0$) corresponding to the pure states, and upper bounded by the convex curve $(E(\beta),\ S(\beta))$ which represents the thermal states of both positive and negative temperatures. Let us denote such a curve as the thermal boundary. The inverse temperature associated with one point of the thermal boundary is given by the slope of the tangent line in such a point, since 
\begin{equation}
\frac{\dd S(\beta)}{\dd E(\beta)}=\beta\, .
\end{equation}
Notice that here a thermal state with negative $\beta$ corresponds
to the case of population inversion, that is, the probability of
finding the system in a given energy eigenstate increases with
increasing energy.

Note also that a point $x_\rho\coloneqq(E(\rho),\ S(\rho))$ of the energy-entropy diagram corresponds in general to several quantum states since, given a Hamiltonian, there are several states with equal energies and entropies.

\begin{figure}
\includegraphics[width=\columnwidth]{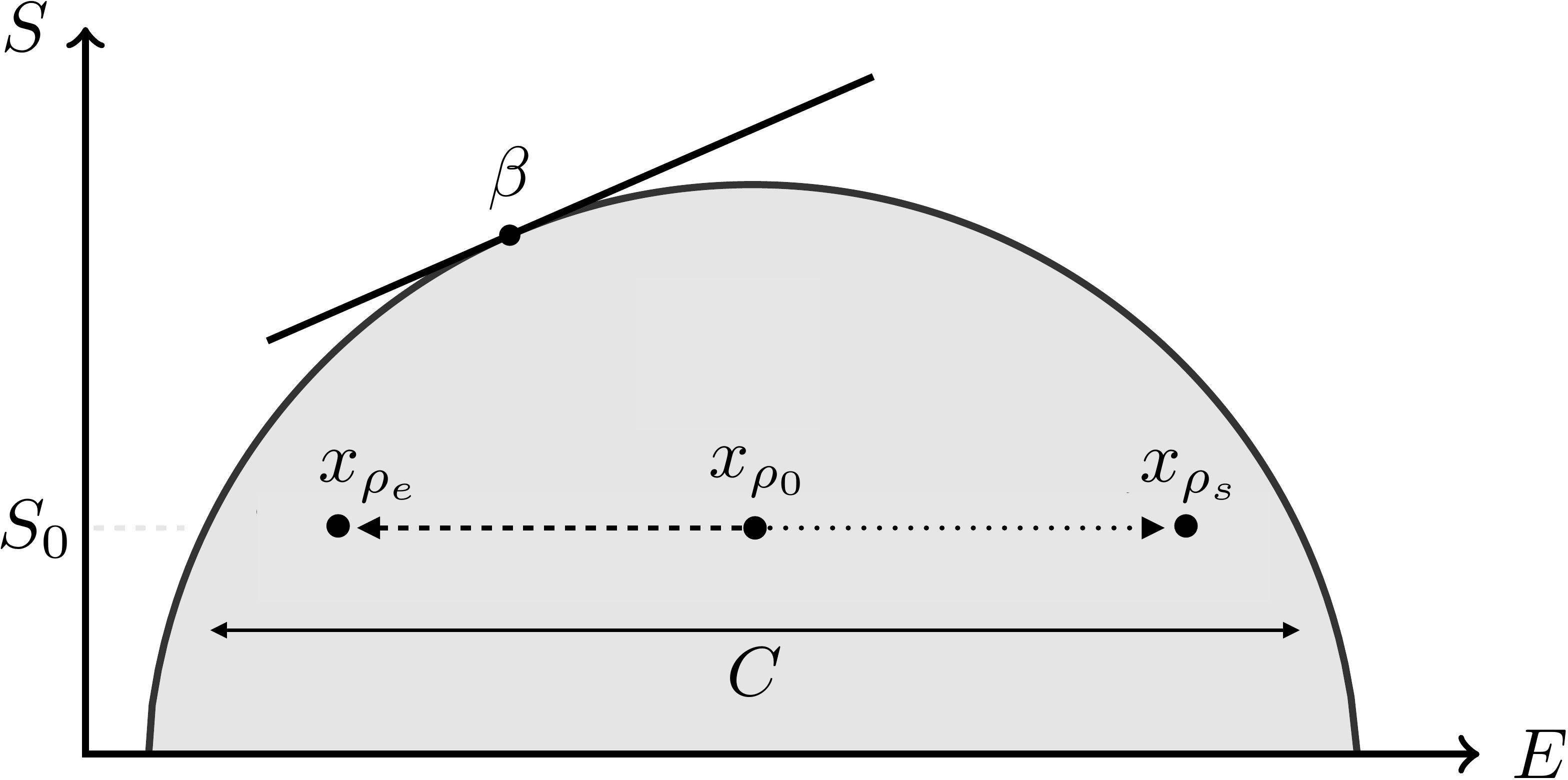}
\caption{
Energy-entropy diagram. Any quantum state $\rho$ is represented in the diagram
as a point with coordinates $x_\rho\coloneqq(E(\rho), S(\rho))$. 
The point in the thermal boundary with slope $\beta$ corresponds to the thermal state with inverse temperature $\beta$.
The trajectory of a charging (discharging) process is represented with a dotted (dashed) line. They connect, respectively, the initial state $\rho_0$, with entropy $S_0$, and the final state $\rho_s$ ($\rho_e$).  The capacity $C$ corresponds to the energetic amplitude at a given entropy. }
\label{fig:energy-entropy-diagram}
\end{figure}
 \subsection{Capacity under entropy-preserving operations}
Now, with the energy-entropy diagram, we derive the capacity $C(S)$ of a quantum battery. A dynamical process that changes either the energy or the entropy of a system is represented as a trajectory in the energy-entropy diagram. 
As we are restricted to charging (discharging) processes that are 
entropy-preserving, the trajectories of such a processes will be horizontal lines in the diagram, see Fig.~\ref{fig:energy-entropy-diagram}.

 \begin{observation}[Capacity]\label{thm:capacity}
Consider an initial battery state $\rho_0$ and a dynamical process $\Gamma_t$, within the set of entropy-preserving maps. The stored and extracted energies are upper bounded by
\begin{equation}\label{eq:capacity}
\begin{split}
E^s_{\rho_0}[\Gamma_t] & \leqslant E_{\max}(S(\rho_0))-E(\rho_0),\\
E^e_{\rho_0}[\Gamma_t] & \leqslant E(\rho_0)-E_{\min}(S(\rho_0)).
\end{split}
\end{equation}
If we only fix the initial entropy $S_0$, there is a unique bound given by addition of the two above:
\begin{equation}
E^s_{\rho_0}[\Gamma_t]+ E^e_{\rho_0}[\Gamma_t] \leqslant C(S_0)
\end{equation}
where $C(S)$ is the entropy-dependent capacity of the battery
\begin{equation}\label{eq:capacity2}
C(S)  =  E_{\max}(S)-E_{\min}(S),
\end{equation}
with
\begin{equation}\label{eq:optimization}
E_{\min/\max}(S)\coloneqq \min_{\sigma \colon S(\sigma)=S}/\max_{\sigma \colon S(\sigma)=S} E(\sigma)\, .
\end{equation}
Here the minimization/maximization is made over all states with entropy $S$. Note that the minimum/maximum in \eqref{eq:optimization} is achieved by thermal states with positive/negative $\beta$ also called completely passive/completely active states \cite{Pusz1978}.
\end{observation}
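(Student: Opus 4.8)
The plan is to read the three inequalities directly off the entropy-preservation property, and then to supply the variational argument identifying the optimizers in \eqref{eq:optimization} as thermal states. Since $\Gamma_t$ is entropy-preserving, every intermediate state obeys $S(\rho(t))=S(\rho_0)=S_0$, so the entire charging/discharging trajectory lies on the horizontal line $S=S_0$ of the energy-entropy diagram. By the very definition \eqref{eq:optimization} of $E_{\min/\max}$ as the extremal energies over all states of a given entropy, every such state has energy confined to $[E_{\min}(S_0),E_{\max}(S_0)]$; in particular $E_{\min}(S_0)\leqslant E(\rho(t))\leqslant E_{\max}(S_0)$ for all $t$.

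The inequalities \eqref{eq:capacity} then follow by taking extrema over $t$. The supremum gives $\max_t E(\rho(t))\leqslant E_{\max}(S_0)$, which, inserted in the definition \eqref{def:capacity} of $E^s_{\rho_0}$, yields the first line; the infimum gives $\min_t E(\rho(t))\geqslant E_{\min}(S_0)$ and hence the second. Adding the two and cancelling the common $E(\rho_0)$ produces $E^s_{\rho_0}[\Gamma_t]+E^e_{\rho_0}[\Gamma_t]\leqslant E_{\max}(S_0)-E_{\min}(S_0)=C(S_0)$, which is \eqref{eq:capacity2}. This half of the statement is essentially immediate once entropy-preservation pins the trajectory to a fixed-entropy line.

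The substantive step, and the one I expect to be the main obstacle, is the characterization of the optimizers in \eqref{eq:optimization} as the completely passive/active (thermal) states. I would solve $\min/\max_\sigma \Tr(\sigma H_B)$ subject to $S(\sigma)=S$ and $\Tr\sigma=1$ using Klein's inequality, i.e. the non-negativity of the quantum relative entropy $D(\sigma\|\omega_\beta)\geqslant 0$ relative to the Gibbs state $\omega_\beta=e^{-\beta H_B}/Z_\beta$, $Z_\beta=\Tr e^{-\beta H_B}$. Working with a single logarithm base (its choice only rescales $\beta$), the substitution $\log\omega_\beta=-\beta H_B-\log Z_\beta$ turns this inequality into
\[
\beta\,E(\sigma)\geqslant S(\sigma)-\log Z_\beta ,
\]
with equality if and only if $\sigma=\omega_\beta$. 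Fixing $\beta$ by the condition $S(\omega_\beta)=S$ and restricting to states with $S(\sigma)=S$ makes the right-hand side the constant $\beta\,E(\omega_\beta)$; dividing by $\beta$ then gives $E(\sigma)\geqslant E(\omega_\beta)$ for $\beta>0$ and $E(\sigma)\leqslant E(\omega_\beta)$ for $\beta<0$. This identifies $E_{\min}(S)$ with the positive-temperature Gibbs state and $E_{\max}(S)$ with the negative-temperature one.

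The final point to settle is that a matching $\beta$ exists and is unique for every admissible entropy. I would verify that $\beta\mapsto S(\omega_\beta)$ is continuous with $\tfrac{\dd}{\dd\beta}S(\omega_\beta)=-\beta\,\var(H_B)$, so that $S(\omega_\beta)$ decreases monotonically from its maximal value $\log d$ at $\beta=0$ (the maximally mixed state, with $d$ the Hilbert-space dimension) down to the minimal entropy as $\beta\to+\infty$, and symmetrically as $\beta\to-\infty$. Hence each admissible $S$ is realised by exactly one $\beta\geqslant 0$ and one $\beta\leqslant 0$, giving respectively the energy-minimising and energy-maximising states. This monotonicity, together with the strict convexity of the thermal boundary already invoked in drawing the diagram, also reproduces the geometric reading of $C(S)$ as the horizontal width of the physical region at height $S$.
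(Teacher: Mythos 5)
Your proof is correct and follows essentially the same route as the paper, which simply declares the bounds ``straightforward from the energy-entropy diagram'' once entropy preservation confines the trajectory to the line $S=S_0$; you merely make that one-line argument explicit. Your additional Klein-inequality derivation of the completely passive/active optimizers, and the monotonicity argument for the existence of a matching $\beta$, proves the concluding ``Note'' that the paper only cites to Pusz and Woronowicz, which is a welcome but not strictly required supplement.
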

\begin{proof}\renewcommand{\qedsymbol}{}
 The proof is straightforward from the energy-entropy diagram and the fact that the processes are entropy-preserving.   
\end{proof}	
A first insight from the energy-entropy diagram is that intersystem correlations are not needed in order to saturate the optimal capacity of a battery. The two states that respectively minimize and maximize the energy given an entropy lie in the thermal boundary and are therefore thermal states.  As thermal states of non-interacting Hamiltonians between different cells are product states, classical or quantum correlations do not provide an advantage in saturating the capacity bound.

Let us finally discuss the operational interpretation of the bounds (\ref{eq:capacity}). In the many-copy limit, for a fixed Hamiltonian and a given initial entropy, the completely passive and completely active states, corresponding to $E_{\min}$ and $E_{\max}$, are always reachable with the help of a small ancilla and a global unitary operation. However, this statement is not always true in the few-copy case, where unitary operations represent only a subset of entropy-preserving ones. In this latter case, even though the bounds (\ref{eq:capacity}) are still respected, it will not always be possible to saturate them when the set of control operations is limited to unitary operations \cite{Niedenzu2019conceptsofworkin}.

\section{Bound on power \label{sec:BoundPower}}
In this section, we derive the bound on power, following geometric approaches towards quantum speed and trajectories. Although our discussion is focused on the problem of energy storage, our bound can also be applied to any other observable $O$, when the main goal is to increase its expectation value $\av{O(t)}$ as fast as possible. 
\subsection{Speed of evolution in state space and energy eigenspace \label{sec:GeoSpeed}}
We start by introducing a notion of distance between states in Hilbert space. At this point we deem appropriate to mention that many of the concepts that will appear in this section appear naturally in the framework of quantum speed limits (QSL) (for a review, see Ref.~\cite{Deffner_2017}). However, here we will use them in the form crafted for the study of energy storage. Let us consider the \emph{Bures angular distance} \cite{bengtsson_zyczkowski_2017}, between two quantum states $\rho$ and $\sigma$, defined as 
\begin{equation}
D_{Q}(\rho, \sigma)=\arccos \left[F(\rho,\sigma)\right]\, ,
\end{equation}
where $F(\rho,\sigma)= \tr \left(\sqrt{\sqrt{\rho}\sigma\sqrt{\rho}}\right)$ is the Uhlmann's fidelity \cite{Nielsen00}. Now, for an evolution of a system $\rho(t) \rightarrow \rho(t+dt)$, the instantaneous speed in state space is defined as
\begin{equation}
v(t)\coloneqq \lim_{\delta t\to 0} \frac{D(\rho(t+\delta t),\rho(t))}{\delta t}\, .
\end{equation}
After a straightforward calculation (see, for example, Ref.~\cite{Deffner_2017}), it can be rewritten as
\begin{align}
v(t)=\frac{1}{2}\sqrt{{I}_Q(\rho(t))},
\end{align}
where $\mathcal{I}_Q(\rho(t))$ is the \emph{quantum Fisher information} (QFI).  
For any quantum state $\rho(t)=\sum_i p_i \proj{i}$, which is undergoing a unitary evolution driven by a Hamiltonian $H(t)$, the QFI is given by
\begin{align}
 {I}_Q(\rho(t))=\sum_{i\neq j}\frac{(p_i-p_j)^2}{p_i+p_j} |\braket{i|H(t)|j}|^2.
\end{align}
The QFI, in information theory, has the interpretation of an information measure \cite{PhysRevLett.72.3439}. In fact, the ${I}_Q(\rho(t)) dt^2$ quantifies the distance between states $\rho(t)$ and $\rho(t+dt)$ that are separated by an infinitesimal time $dt$ and driven by the Hamiltonian $H(t)$.  In the context of quantum metrology, a higher value of QFI indicates a potential to result in a higher precision estimation of a parameter. For instance, $\mathcal{I}_Q(\rho(\theta))=0$ implies that any information about the parameter $\theta$ cannot be extracted, whereas divergent $\mathcal{I}_Q(\rho(\theta))\to \infty$
means estimation of the parameter $\theta$ with infinite precision. We refer to the Refs. \cite{Toth2014} and \cite{Kolodynski2015} for a review in the context of quantum Information and quantum Optics, respectively.

For the case of pure states, the Bures distance reduces to the Fubini-Study distance, which is given by
\begin{equation}\label{eq:Fubini-Study}
D(\psi,\phi)\coloneqq \arccos|\bra{\phi}\psi\rangle|\, .
\end{equation}
Then, the corresponding speed, for the case of a unitary time evolution  driven by a Hamiltonian $H_C(t)$, becomes
\begin{equation}
v(t)=\sqrt{\bra{\psi(t)}H_C(t)^2\ket{\psi(t)}-\bra{\psi(t)}H_C(t)\ket{\psi(t)}^2}\eqqcolon \Delta H_C(t) \, .
\end{equation}

Hence, for pure states, the speed of the system in the Hilbert space is given by the instantaneous energy variance measured by the charging Hamiltonian $H_C(t)$ that drives the evolution of the system. Note that, in the case of a time-independent charging Hamiltonian, the energy variance does not change during the entire evolution.
Once we have a notion of speed, the length of the trajectory followed for a time $t_F$ is given by
\begin{equation}
\label{eq:LengthTrajectory}
L[\rho(t), t_F]= \int^{t_F}_0 \dd t \ v(t)\, .
\end{equation}

\begin{figure}
\includegraphics[width=0.85\linewidth]{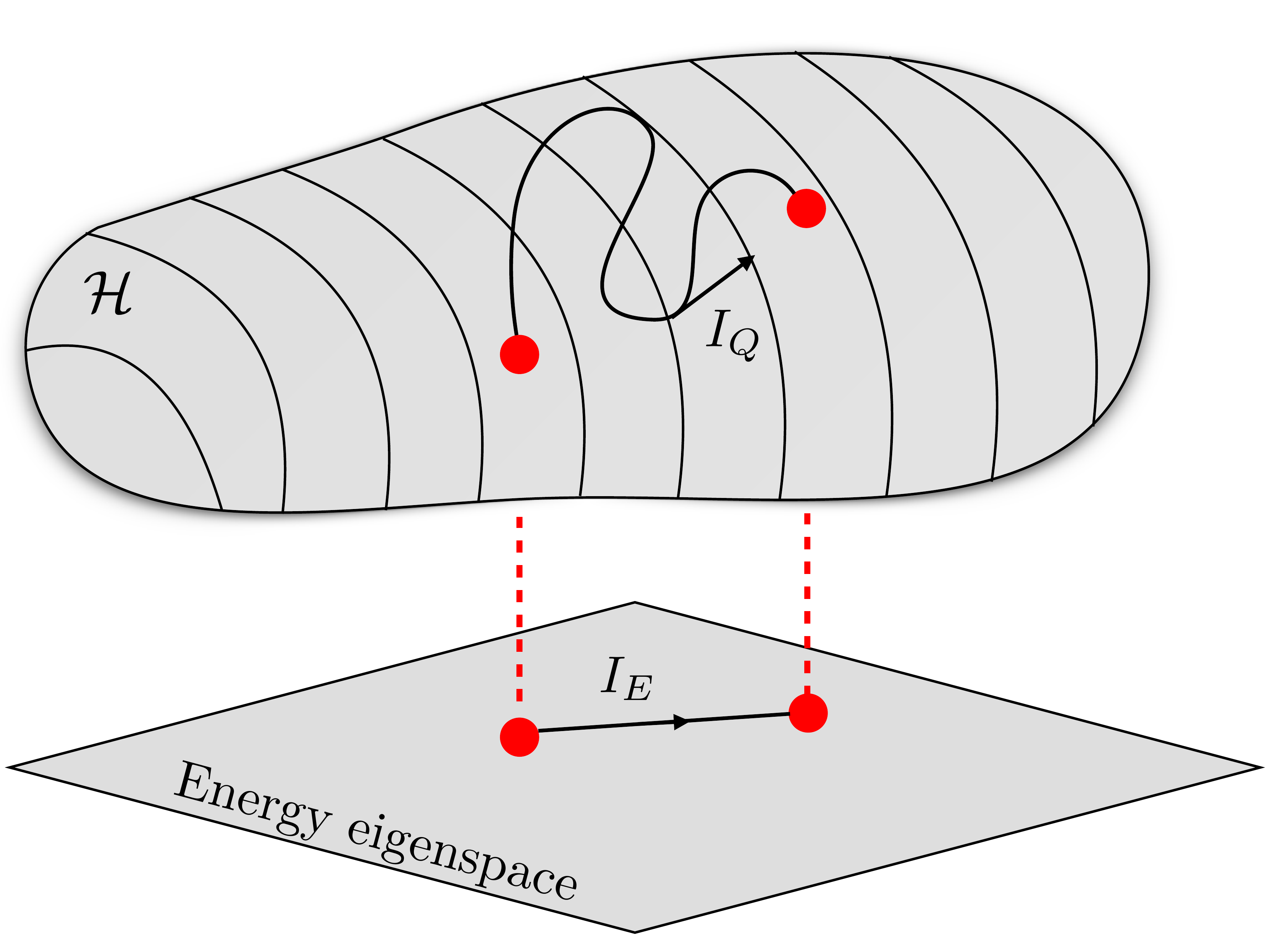}
\caption{A schematic representation of speed of quantum evolution in Hilbert space ($I_Q$) and the speed of evolution in the energy eigenspace of the battery Hamiltonian ($I_E$). Notice that, in general, $I_Q \geqslant  I_E $.} 
\label{fig:ProjectedI}
\end{figure}

When considering the Bures angle as a measure of distance, we are looking at the physical distinguishability of the system. However, when looking at systems as quantum batteries, i.e., energy storage devices, we are not interested in how fast the state changes, but in how fast its energy distribution evolves. In other words, there are orthogonal states (perfectly distinguishable) that have identical energy distributions. Thereby, although the system can be moving very fast in the state space, its change in the energy content can be negligible. From this perspective, it is useful to introduce a measure of distance between quantum states not based on their statistical distinguishability, but in their energetic distinguishability (see Fig.~\ref{fig:ProjectedI}).
To do so, let us write the battery Hamiltonian in its spectral representation 
\begin{equation}
H_B=\sum_k E_k P_k\, ,
\end{equation}
where $P_k$ is the projector onto the eigenspace associated to the eigenvalue $E_k$. The energy distribution of a state $\rho$ is given by the populations
\begin{equation}\label{eq:def_pk}
p_k\coloneqq \Tr\left(P_k\, \rho\right)\, .
\end{equation}
The speed in the energy space can then be defined as the relative entropy distance between the energy distributions in two consecutive moments of time
\begin{equation}
v_E(t)\coloneqq \lim_{\delta t\to 0} \frac{D_{KL}(\vec p(t+\delta t)\|\vec p(t))^{1/2}}{\delta t}\, ,
\end{equation}
where $D_{KL}(\vec p,\vec q)$ is the relative entropy or Kullback-Leibler divergence \cite{bengtsson_zyczkowski_2017} between two discrete probability distributions $\vec p$ and $\vec q$. It is defined by 
\begin{equation}
D_{KL}(\vec p\|\vec q)\coloneqq \sum_k p_k \log_2\frac{p_k}{q_k}\, .
\end{equation}
After a straightforward calculation, one gets
\begin{equation}
D_{KL}\left(\vec p(t+\delta t)\|\vec p(t) \right)= \sum_k\frac{\dot p_k^2}{p_k}\delta t^2 +O(\delta t^3)\, ,
\end{equation}
where the first order contribution vanishes due to $\sum_k \dot p_k =0$. This means that the right distance to define a speed is the square root of the relative entropy, which can be written in terms of the \emph{Fisher information in the energy eigenspace} as
\begin{equation}
I_E(t)  :=  2v_E(t)^2=\sum_k \left( \frac{\dd}{\dd t}\log_2 p_k(t)\right)^2 p_k(t)\, .
\end{equation}
It is interesting to point out that the same conclusion is reached when, instead of using the Kullback-Leibler distance, one employs the angular distance $D(p,q)\coloneqq \arccos F_{\textrm{cl}}(p,q)$, with $F_{\textrm{cl}}(p,q)=\sum_k \sqrt{p_k q_k}$ being the classical fidelity. As the quantum fidelity reduces to the classical one in the case where the states are diagonal in the same eigenbasis, the speed in energy space can be understood as the speed in the state space of the dephased states in the energy basis
\begin{equation}
v_E(t)=  \lim_{\delta t\to 0} \frac{D\left(\overline{\rho(t+\delta t)},\overline{\rho(t)}\right)}{\delta t} \leqslant v(t)\, ,
\end{equation}
where $\bar \rho \coloneqq \sum_k P_k\rho P_k$ represents the dephased state in the energy eigenbasis. We refer to the Fig.~\ref{fig:ProjectedI} for a schematic representation. The last inequality is a consequence of the Bures distance being monotonically decreasing under quantum operations.

Note that both the $I_{Q}$ and the $I_E$ of uncorrelated and independent systems are additive. Thus, for a system composed on $N$ identical subsystems in which each subsystem goes through the same independent evolution, the speed at which the system runs along a trajectory scales as $\sqrt{N}$. This scaling for independent subsystems will be relevant in the later discussion.

\subsection{The bound on power}
Equipped with this geometric framework, let us introduce the following result. That is an upper bound on the rate at which any dynamical process can change the mean value of a given moment of an observable.

\begin{theorem} \label{thm:main}
Given an observable $O$, that is time-independent  in the Schr\"odinger picture, the following inequality is satisfied
\begin{equation}
\left(\frac{\text{d}}{\text{dt}}\langle{O^m}\rangle\right)^2 \leq \Delta {\left(O^m\right)^2}I_O(t), 
\end{equation}
where $\Delta {\left(O^m\right)^2}$ is the variance of the $m$-th moment of the observable that captures how non-local the evolution process is, and $I_O(t)$ is the Fisher information, which corresponds to the speed of the process in the observable ($O$) eigenspace.
\end{theorem}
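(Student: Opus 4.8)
The plan is to reduce the statement to an elementary Cauchy--Schwarz inequality for the time-derivatives of the populations of $O$. First I would write the observable in its spectral form $O=\sum_k o_k P_k$, so that $O^m=\sum_k o_k^m P_k$; because $O$ is time-independent in the Schr\"odinger picture, the eigenvalues $o_k$ and the projectors $P_k$ are fixed in time. Consequently all the time dependence of $\av{O^m}=\sum_k o_k^m p_k(t)$ resides in the populations $p_k(t)\coloneqq\Tr(P_k\rho(t))$, and differentiating gives $\frac{\dd}{\dd t}\av{O^m}=\sum_k o_k^m \dot p_k(t)$. This is precisely the object whose square I must bound.

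The central step is to exploit conservation of probability. Since $\sum_k p_k(t)=1$ for all $t$, we have $\sum_k \dot p_k=0$, so I may subtract any constant multiple of this vanishing sum; in particular I replace $o_k^m$ by the centered quantity $o_k^m-\av{O^m}$ without altering the derivative, obtaining $\frac{\dd}{\dd t}\av{O^m}=\sum_k\bigl(o_k^m-\av{O^m}\bigr)\dot p_k$. I would then split each summand as $\bigl(o_k^m-\av{O^m}\bigr)\sqrt{p_k}\cdot \dot p_k/\sqrt{p_k}$ and apply the Cauchy--Schwarz inequality, which yields $\bigl(\frac{\dd}{\dd t}\av{O^m}\bigr)^2\le \bigl[\sum_k (o_k^m-\av{O^m})^2 p_k\bigr]\bigl[\sum_k \dot p_k^2/p_k\bigr]$.

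It then remains to recognize the two factors. The first is $\sum_k (o_k^m-\av{O^m})^2 p_k=\av{O^{2m}}-\av{O^m}^2=\Delta(O^m)^2$, the variance of the $m$-th moment, which is exactly the quantity advertised as encoding the non-local character of the evolution. The second, $\sum_k \dot p_k^2/p_k=\sum_k\bigl(\frac{\dd}{\dd t}\log p_k\bigr)^2 p_k$, is the Fisher information $I_O(t)$ in the eigenspace of $O$, in direct analogy with the $I_E$ introduced above. Combining these two identifications gives the claimed inequality.

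The argument is short, and I do not expect a genuine analytic obstacle; the only points requiring care are conceptual and book-keeping. The first is the centering step: subtracting $\av{O^m}$ is what turns the bare second moment $\av{O^{2m}}$ into the \emph{variance}, tightening the bound and giving it its physical interpretation --- without it one would only obtain $\av{O^{2m}}\,I_O(t)$. The second is reconciling the multiplicative constant in $I_O(t)$ with the logarithm base used in the earlier definition of the Fisher information; the natural-logarithm convention gives $I_O=\sum_k\dot p_k^2/p_k$ directly, whereas the $\log_2$ convention introduces a factor of $(\ln 2)^{2}$ that must be tracked. Finally, one should note that time-independence of $O$ is essential: were $o_k$ or $P_k$ to carry explicit time dependence, extra terms would enter $\frac{\dd}{\dd t}\av{O^m}$ and the clean variance--Fisher structure would break.
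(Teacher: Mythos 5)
Your proof is correct and follows essentially the same route as the paper: spectral decomposition of $O^m$, use of $\sum_k \dot p_k=0$ to center the eigenvalues, and Cauchy--Schwarz to split off the variance and the Fisher information. The only cosmetic difference is that the paper inserts an arbitrary constant $C(t)$ and minimizes it afterward to recover $C=\langle O^m\rangle$, whereas you center directly at the mean from the start; your remark about the $\log_2$ versus natural-logarithm normalization of $I_O$ is a fair observation about the paper's conventions but does not affect the argument.
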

\begin{proof}\renewcommand{\qedsymbol}{}
We first write the $m$-th power of the operator $O$ in its spectral decomposition:
\be 
O^m=\sum_k O_k^m \Pi_k,
\ee
where $O_k$ are the eigenvalues of $O$ and $\Pi_k$ the projectors onto the corresponding subspaces. Using this decomposition, we can write the expected value of $O^m$ as 
\be 
\langle O^m\rangle = \sum_k O_k^m \pi_k(t),
\ee
where $\pi_k(t) \coloneqq \tr{(\rho(t)\Pi_k)}$. Taking the time-derivative of the last equation we get
\be 
\frac{\text{d}}{\text{dt}}\langle O^m\rangle = \sum_k O_k^m \dot{\pi}_k(t)=\sum_k \sqrt{\pi_k(t)}\left(O_k^m-C(t)\right)\frac{\dot{\pi}_k(t)}{\sqrt{\pi_k(t)}},
\ee
where in the last step we have used $\sum_k\dot{\pi}_k(t)=0$. Finally, using the Cauchy-Schwarz inequality, we get
\be 
\left(\frac{\text{d}}{\text{dt}}\langle O^m\rangle\right)^2 \leq \left[\sum_k \pi_k(t)\left(O_k^m-C(t)\right)^2\right]\left[\sum_l\frac{\dot{\pi}_l(t)^2}{{\pi_l(t)}}\right].
\label{eq:cauchy-schwarz_general}
\ee
We can now identify the second factor on the right-hand side as the \emph{Fisher information} $I_O$, representing the speed of evolution in the observable $(O)$ eigenspace. 
Furthermore, as Eq.~\eqref{eq:cauchy-schwarz_general} is valid for any $C(t)$, a minimization over $C$ leads to
\begin{equation}
C= \langle O^m\rangle,
\end{equation}
which leads us to identify the first factor on the right-hand side as $\Delta {\left(O^m\right)^2}$, and completes the proof.
\end{proof}
 
\begin{corollary}[Power]  \label{cor:PowerBound}
Given a process for charging (or discharging) a battery, with Hamiltonian $H_B$, its instantaneous power fulfills 
\begin{equation}\label{eq:main_thm}
P(t)^2\leqslant \Delta H_B(t)^2 I_E(t),
\end{equation}
where $\Delta H_B(t)^2$ is the variance of the battery Hamiltonian that captures how non-local in energy the charging process is, and $I_E(t)$ is the Fisher information, which corresponds to the speed of the charging process in the energy eigenspace.
\end{corollary}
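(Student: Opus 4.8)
The plan is to recognize that this corollary is simply the specialization of Theorem~\ref{thm:main} to the battery Hamiltonian and its first moment, so the bulk of the work has already been done. First I would set $O = H_B$ in the theorem, which is legitimate precisely because $H_B$ is time-independent in the Schr\"odinger picture, and take $m=1$. With these choices the expectation value $\langle O^m\rangle$ becomes $\langle H_B\rangle = E(\rho(t))$, the internal energy of the battery, so that by the very definition of power its time derivative is $P(t) = \frac{\dd}{\dd t}E(\rho(t))$. The left-hand side of the theorem's inequality, $\left(\frac{\dd}{\dd t}\langle O^m\rangle\right)^2$, therefore reads exactly $P(t)^2$.

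Next I would match the two factors on the right-hand side. The variance $\Delta(O^m)^2$ collapses to $\Delta H_B(t)^2$, the energy variance of the instantaneous battery state. The Fisher information $I_O(t)$ requires a short but essential check: in the proof of Theorem~\ref{thm:main} the projectors $\Pi_k$ are those onto the eigenspaces of $O$, and the probabilities $\pi_k(t) = \tr(\rho(t)\Pi_k)$ are the populations in those eigenspaces. Upon setting $O = H_B$, these projectors become exactly the $P_k$ of the spectral decomposition $H_B = \sum_k E_k P_k$, so that $\pi_k(t) = p_k(t) = \tr(\rho(t) P_k)$ coincides with the energy populations of Eq.~\eqref{eq:def_pk}. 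Hence $I_O(t) = \sum_k \dot{p}_k(t)^2/p_k(t)$ is precisely the Fisher information in the energy eigenspace $I_E(t)$ introduced in Sec.~\ref{sec:GeoSpeed}.

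Since the statement is a direct corollary, there is no genuine analytic difficulty; the one point I would flag as deserving care is exactly this identification $I_O = I_E$, namely confirming that the classical Fisher information associated with the eigenprojectors of $H_B$ is the same object as the speed-in-energy-space quantity constructed earlier from the Kullback--Leibler (equivalently, classical-fidelity) distance between the energy distributions $\vec p(t)$. Once that identification is made explicit, substituting all four matched quantities into the inequality of Theorem~\ref{thm:main} yields $P(t)^2 \leqslant \Delta H_B(t)^2\, I_E(t)$, which is the claim.
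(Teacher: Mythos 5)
Your proposal is correct and follows exactly the paper's own route: the corollary is stated there as the special case $O=H_B$, $m=1$ of Theorem~\ref{thm:main}, with the same identifications of $P(t)$, $\Delta H_B(t)^2$, and $I_E(t)$ that you make. The one point you rightly flag---that the projectors $\Pi_k$ become the energy eigenprojectors $P_k$ so that $I_O$ coincides with the $I_E$ of Sec.~\ref{sec:GeoSpeed}---is precisely what the paper leaves implicit, and your spelling it out is a harmless (indeed helpful) elaboration rather than a deviation.
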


The corollary above can be seen as a special case of the theorem \ref{thm:main}, where the observable is the battery Hamiltonian, $O=H_B$, and $m=1$. We can parametrize the tightness of the bound for power via an angle $\theta_P$ that satisfies
\be 
\cos{\theta_P}\coloneqq\frac{P}{\sqrt{\Delta H_B^2 I_E^2}}.
\ee
This angle may be used to quantify how efficient a charging process is in terms of power, if one considers $\Delta H_B^2$ and $I_E$ as resources that can give maximum power when $\cos(\theta_P)=1$. Furthermore, in some cases, it could be useful to consider a time-averaged version of the bound (\ref{eq:main_thm}) to eliminate the time-dependence. One possibility would be to consider the bound
\be 
\frac{\Delta E}{\Delta t} \leq \sqrt{\mean{\Delta H_B^2}_{\Delta t} \mean{{I_E}}_{\Delta t}}\, , 
\ee
where $\Delta E$ is the change in the battery energy during the interval $\Delta t$, and 
\begin{equation}\label{eq:TimeAv}
\mean{X}_{\Delta t} \coloneqq 1/\Delta t \int_{t_0}^{t_0+\Delta t}\  X\  dt. 
\end{equation}

To show that the speed of evolution in energy eigenspace $I_E$ is more informative than the one in Hilbert space when studying power, let us consider that the charging process is driven by a Hamiltonian evolution given by $H_C(t)$. We can derive the following inequality from Heisenberg's uncertainty principle:
\begin{equation} 
\left(\frac{\text{d}}{\text{dt}}\mean{H_B}\right)^2=|\mean{[H_B,H_C(t)]}|^2\leq 4\Delta H_B^2\Delta H_C(t)^2,
\label{eq:boundheis}
\end{equation}
showing that our bound (\ref{eq:main_thm}) is lower than the one obtained using Heisenberg's principle, as $I_E \leq 4\Delta H_C^2$. 
Note that this improvement is due to the fact that we are lowering the factor $4\Delta H_C^2$, which is the speed of evolution in Hilbert space by replacing it with $I_E$ that specifically quantifies the speed in energy space.
 
We also remark that, by using $I_E$ instead of $\Delta H_C^2$, the quantities appearing in the bound (\ref{eq:main_thm}) only depend on the battery Hamiltonian $H_B$ and the battery state $\rho(t)$. Thus, the bound is not restricted to the case of Hamiltonian evolution, but it can be used for any dynamical map. However, one should be careful with the connection between stored energy and stored work in the presence of an environment.

\subsection{Quantum advantage in power}\label{sec:Quantum_adv_power}
Once again, let us consider a battery that is made up of $N$ identical quantum cells, each with a Hamiltonian $h_j$, such that the total battery Hamiltonian reads $H_B=\sum_{j=0}^{N-1} h_j$. Now, given two charging processes, how can we meaningfully state that one of the two has a better performance in terms of power than the other? In the literature, this comparison is made in reference to the \emph{parallel charging} case \cite{1367-2630-17-7-075015, Campaioli2017, PhysRevLett.120.117702}. For a battery composed of $N$ identical quantum cells, a parallel charging process is a unitary evolution driven by a charging Hamiltonian of the form 
\begin{equation}
H_C^{||}=\sum_{j=0}^{N-1} h_c^j\, ,
\end{equation}
where the Hamiltonian $h_c^j$ locally drives the charging process of the $j$-th quantum cell in the battery. 

Now, to compare with any other unitary charging process, driven by a general charging Hamiltonian $H_C$, different quantities are chosen and normalized such that they give rise to the same scaling with the number of cells as in the parallel case. These normalization procedures impose an extensive scaling (linear in $N$) of the norm of the Hamiltonian $\norm{H_C}$, its variance $\Delta H_C^2$, or their time-averages in the case of time-dependent Hamiltonians, equal to the case with $H_C^{||}$. Under this constraint, unnormalized and normalized Hamiltonians are related by a rescaling $H_C \rightarrow x(N) H_C$.  The rescaling ensures that the total energy available to drive the charging process is always the same at order $N$. In the context of these normalization criteria, speed-ups in power compared to parallel charging have been theoretically explored in Refs.~\cite{1367-2630-17-7-075015, Campaioli2017}. In these works, entangled states or entangling operations are considered to be closely related with such speed ups.

There are two main issues with the aforementioned approach to compare among batteries. First, the presented normalization criteria may not correspond with the real experimental limitations. It could very well be that the experimentalist is limited by the strength of the local interactions but not by its amount. Hence, a fair comparison will mainly depend on the experimental capabilities. A solution to this limitation can be given by our approach, in terms of the bound in \eqref{eq:main_thm}, as it is derived irrespective of any normalization. There, the bound is given in terms of $I_E$, which may scale faster than $N$.

Once the speed of evolution in energy eigenspace has been properly taken into account, we focus on the second term in the bound \eqref{eq:main_thm}, i.e., the energy variance $\Delta(H_B)^2_\rho$, and we analyze it from the geometrical point view. In particular, this terms encodes the information about how quantumness, such as quantum entanglement, plays an important role in enhancing the power of the battery. Below, we study its relation with quantum entanglement, and relate it with the charging power.

\subsection{Relation of power with entanglement}\label{sec:fluct}
Consider a battery that is made up of $N$ identical quantum cells, with the battery Hamiltonian  $H_B=\sum_{i=0}^N h_i$. The variance of $H_B$ for an arbitrary state $\rho$ reads
\begin{equation}\label{eq:correlations_energy}
\begin{split}
\Delta(H_B)^2_\rho&= \sum_i \left(\tr (h_i^2\rho)-\tr(h_i\rho)^2\right) \\
&+\sum_{i\neq j} \left(\tr (h_i h_{j}\rho)-\tr(h_{i}\rho)\tr(h_{j}\rho)\right)\, .
\end{split}
\end{equation}

Let us discuss the relation of entanglement with \eqref{eq:correlations_energy} for the case of pure states of the battery. The first sum corresponds to the single-cell energy variance, which we denote as $ \Delta^{Loc}(H_B)^2_\rho=\sum_i \left(\tr (h_i^2\rho)-\tr(h_i\rho)^2\right)$. This quantity scales linearly with the number of cells $N$ and coincides with $\Delta(H_B)^2_\rho$ in the case of separable states ($\rho=\bigotimes \rho_i$).
 As a consequence, the only way for the variance to scale faster than $N$ is that the battery state is non-separable, and thus entangled. 
 
Indeed, in some cases one can even bound the energy variance $\Delta(H_B)^2_\rho$ with the multipartite entanglement properties of the battery state \cite{PhysRevA.85.022322, PhysRevA.85.022321}. For example, let us consider linear qubit Hamiltonians, typically used as models of a quantum battery \cite{1367-2630-17-7-075015, PhysRevLett.120.117702, PhysRevA.97.022106}, and that will be studied in Secs.~\ref{sec:ParadigmaticExamples} and \ref{sec:BoundsSpinModel}, which have the form 
\be 
\label{eq:hamiltonian_ls}
H_B = \frac{1}{2} \sum_{j=0}^{N-1} \sigma_z^j\, ,
\ee
where $\sigma_z^j$ is the $z$ Pauli matrix corresponding to the $j$-th site of the qubit chain, and we have defined the single-cell energy spacing as the unit of energy. To characterize the multipartite entanglement of this system consisting on $N$ qubits, one introduces the notion of a $k$-producible state \cite{PhysRevA.71.052302, G_hne_2005}: a pure state is $k$-producible if it is a tensor product of at most $k$-qubit states, that is
\be 
\ket{\Psi_{k-\text{prod}}}=\otimes_{l=1}^M\ket{\Psi_l}\, ,\hspace{5mm} \ket{\Psi_l} = \otimes_{j}\ket{\phi_j}\, , \hspace{5mm} \# j \leqslant k\, ,
\ee
where $\ket{\phi_j}$ is a single qubit state on the site $j$ of the chain. A mixed state is $k$-producible if it is a mixture of pure $k$-producible states
\be 
\rho_{k-\text{prod}}=\sum_\alpha p_\alpha \ket{\Psi_{k-\text{prod}}^\alpha}\bra{\Psi_{k-\text{prod}}^\alpha}\, , \hspace{5mm} \left(\sum_\alpha p_\alpha = 1\right).
\ee
Based on this classification of states, a state is $k$-qubit entangled if it is $k$-producible but not $(k-1)$-producible. For our purposes, a useful inequality for $N$-qubit $k$-producible states is \cite{PhysRevA.85.022322, PhysRevA.85.022321}  
\be\label{eq:var_ent}
4\Delta(H_B)^2_\rho \leqslant rk^2+(N-rk)^2,
\ee
where $r$ is the integer part of $N/k$. Any state (pure or mixed) that violates the bound of Eq.~\eqref{eq:var_ent} thus contains $(k+1)$-qubit entanglement. Using the above inequality, we recast the bound on power and arrive to the corollary \eqref{cor:PowerEnt}.

\begin{corollary}[Power and entanglement]  \label{cor:PowerEnt}
Given a process for charging (or discharging) a quantum battery composed of $N$-qubits, with a battery Hamiltonian of the form of Eq.~\eqref{eq:hamiltonian_ls}, if at time $t$ the battery is at most $k$-qubit entangled, its instantaneous power fulfills 
\begin{equation}\label{eq:main_thm_ent}
P(t)^2\leqslant \Delta H_B(t)^2  I_E(t) \leqslant  \underbrace{\frac{1}{4}\left[rk^2+(N-rk)^2\right]}_{\text{$k$-\normalfont{producibility}}} \,  \underbrace{I_E(t)}_{\substack{\text{\normalfont{disting. in}} \\ \text{\normalfont{energy eig.}}}},
\end{equation}
where $r$ is the integer part of $N/k$. In the cases where $r$ is an exact integer, the inequality reduces to 
\begin{equation}\label{eq:main_thm_ent2}
P(t)^2\leqslant \frac{kN}{4}\,  I_E(t).
\end{equation}
\end{corollary}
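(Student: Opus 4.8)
The plan is to obtain Corollary~\ref{cor:PowerEnt} as a direct composition of two ingredients already in hand: the general power bound of Corollary~\ref{cor:PowerBound} and the variance inequality~\eqref{eq:var_ent} for $k$-producible states. First I would start from
\begin{equation}
P(t)^2 \leqslant \Delta H_B(t)^2\, I_E(t),
\end{equation}
which holds for any charging (or discharging) process and requires no assumption about the state. The only remaining task is to upper bound the variance factor $\Delta H_B(t)^2$ using the entanglement hypothesis.

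Next, I would translate the premise ``at most $k$-qubit entangled'' into the statement that $\rho(t)$ is $k$-producible: by the classification given above, a state that is $k'$-qubit entangled for some $k'\leqslant k$ is $k'$-producible, and a tensor product of at most $k'$-qubit blocks is a fortiori a tensor product of at most $k$-qubit blocks, hence $k$-producible. The premise therefore places $\rho(t)$ exactly in the class to which inequality~\eqref{eq:var_ent} applies. Since the battery Hamiltonian has the form~\eqref{eq:hamiltonian_ls}, I may invoke~\eqref{eq:var_ent} and rearrange it as $\Delta H_B(t)^2 \leqslant \tfrac{1}{4}\bigl[rk^2+(N-rk)^2\bigr]$ with $r=\lfloor N/k\rfloor$. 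Substituting this into the power bound immediately yields~\eqref{eq:main_thm_ent}, with the two factors carrying the advertised roles: the bracket as a $k$-producibility ceiling on the non-local energy spread, and $I_E(t)$ as the distinguishability speed in the energy eigenspace.

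Finally, to recover~\eqref{eq:main_thm_ent2} I would specialize to the case in which $N/k$ is an exact integer, so that $r=N/k$ and the remainder term vanishes, $N-rk=0$. Then $rk^2+(N-rk)^2 = rk^2 = (N/k)k^2 = Nk$, and the prefactor collapses to $kN/4$.

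I do not expect a genuine analytical obstacle here, since the real content is carried entirely by the imported inequality~\eqref{eq:var_ent}; the corollary is essentially a repackaging of Corollary~\ref{cor:PowerBound}. The only points demanding care are the logical bookkeeping in the second step — confirming that ``at most $k$-qubit entangled'' really does license a bound stated for $k$-producible states — and, in the exact-integer specialization, verifying that the quadratic remainder drops out cleanly rather than leaving a lower-order correction.
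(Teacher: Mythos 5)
Your proposal is correct and matches the paper's own (implicit) derivation: the paper likewise obtains Corollary~\ref{cor:PowerEnt} by chaining the general power bound of Corollary~\ref{cor:PowerBound} with the $k$-producibility variance inequality~\eqref{eq:var_ent}, and your handling of the ``at most $k$-qubit entangled $\Rightarrow$ $k$-producible'' step and of the exact-integer specialization $N-rk=0$, $rk^2=Nk$ is exactly right.
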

Notice that the bound of Eq.~\eqref{eq:main_thm_ent} sets the limitations imposed in power by the multipartite entanglement properties of the battery state and its distinguishability in the energy eigenspace, both for pure and mixed states. Thus, this bound leads to a deeper understanding of the relation between entanglement and power in qubit-based quantum batteries. 
For instance, the inequality \eqref{eq:var_ent} is saturated only by a battery state which is a product of $r$ Greenberger-Horne-Zeilinger  (GHZ) states of $k$ qubits $\ket{\text{GHZ}_k}$, and another set of GHZ states of $N-rk$ qubits $\ket{\text{GHZ}_{N-rk}}$\cite{PhysRevA.85.022321}, with
\be 
\ket{\text{GHZ}_k}=\frac{1}{\sqrt{2}}\left(\ket{0}^{\otimes k}+\ket{1}^{\otimes k}\right).
\ee
An example of such a GHZ state was indeed introduced (with $k=N$) in the seminal paper 
\cite{1367-2630-17-7-075015}, introducing the potential entanglement boost in quantum batteries. 

Another important insight is that many-body charging Hamiltonians with a large participation number and low order of interactions will generically lead to battery states that strongly differ from the optimal GHZ ones. In that sense, we would also like to note that it is interesting to compare our bound of Eq.~\eqref{eq:main_thm_ent} with the ones derived in the work \cite{Campaioli2017}, where the relation of the quantum advantage in Power with the interaction order and participation number has been considered.

\section{Paradigmatic examples}
\label{sec:ParadigmaticExamples}
One example that has become paradigmatic in the field \cite{1367-2630-17-7-075015, PhysRevLett.120.117702, PhysRevA.97.022106} is the charging of a battery composed of non-interacting cells by a time-independent Hamiltonian evolution. Once again, let us consider the battery Hamiltonian of $N$ two-level systems with the Hamiltonian $H_B=\frac{1}{2}\sum_{j=0}^{N-1} \sigma_z^j$, which has the same form as Eq.~\eqref{eq:hamiltonian_ls}. Initially, at $t=0$, the battery is in its ground state $\ket{\psi_0}=\ket{0}^{\otimes N}$.  Notice that the ground state has a negative energy and, as we are interested in the energy difference, hereafter we will define the stored energy at a given time as  
\be 
E(t) := \Tr(\proj{\psi(t)} H_B)-\Tr(\proj{\psi_0}H_B).
\ee
As the initial state is a pure state, which has zero entropy, one trivially gets that the capacity is given by $C_{S=0}=N$. In order to charge the battery, one can use different charging Hamiltonians $H_C$. In particular, illustrative examples are the cases of a \emph{parallel}, \emph{global} and \emph{hybrid} Hamiltonians, represented by $H_C^{||},\ H_C^{\#}$, and $H_C^{h}$, respectively. These Hamiltonians are given by
\begin{equation}\label{eq:paradigmatic_hamiltonians}
\begin{split}
H_C^{||}&=\lambda\sum_{j=0}^{N-1} \sigma_x^j, \\
H_C^{\#}&=\lambda\otimes_{j=0}^{N-1} \sigma_x^j, \\
H_C^{h}&=\lambda\sum_{j=0}^{q-1}\otimes_{i=1}^r\sigma_x^{qj+i}\, , 
\end{split}
\end{equation}
where $\sigma_x^j$ is the $x$ Pauli matrix acting on the $j$th cell, and in the \emph{hybrid} case $N=qr$.  Regarding time-units, here $\lambda$ represents a charging frequency, where we use the convention $\hbar=1$. The main features of these three charging processes are outlined in the  \tabref{table:comparison_Hamiltonians} below. 

\begin{table}[h]
\caption{Comparison between three different charging Hamiltonians of a battery composed of $N$ qubit cells: the \emph{parallel} charging, the fully interactive \emph{global} Hamiltonian (optimal) and an \emph{hybrid} construction where $m$ blocks of $q$ qubits are in parallel charged in a fully interactive way. Also see Fig.~\ref{fig:Glob_par}.}
\begin{center}
\begin{tabular}{cccc}
& {\bf Parallel}  & {\bf Global} & {\bf Hybrid}\\
\hline \\[-.5em]
$H_C$  & \hspace{.005cm} $\lambda\sum_j \sigma_x^j$ \hspace{.005cm} & \hspace{.005cm} $\lambda\otimes_j \sigma_x^j$ \hspace{.005cm} & \hspace{.005cm} $\lambda\sum_{j=0}^{q-1}\otimes_{i=1}^r\sigma_x^{qj+i}$\hspace{.005cm} \\[.5em]
\hline \\[-.5em]
$\norm{H_C}$  & $ N\lambda$ & $\lambda$ & $q\lambda $ \\[.5em]
\hline \\[-.5em]
$\lambda t_F$  & $\pi/2$ & $\pi/2$ & $\pi/2$ \\[.5em]
\hline \\[-.5em]
$\Delta H_C^2$  & $N\lambda^2$ & $\lambda^2$ & $q\lambda^2$ \\[.5em]
\hline \\[-.5em]
$I_E$  & $4N\lambda^2$ & $4\lambda^2$ & $4q\lambda^2$ \\[.5em]
\hline \\[-.5em]
$E(t)$  & $N p$ & $Np$ & $Np$ \\[.5em]
\hline \\[-.5em]
$\Delta H_B(t)^2$  & $N p(1-p)$ & $N^2 p(1-p)$ & $Nr p(1-p)$ \\[.5em]
\hline  \\[-.5em]
$k$-$qubit\: ent.$ & $1$ & $N$ & $r$ \\[.5em]
\hline \\[-.5em]
$\Delta^{Ent} H_B(t)^2$  & $0$ & $Np(N-1)(1-p)$ & $Np(r-1)(1-p)$ \\[.5em]
\hline  \\[-.5em]
$P(t)$  & $\sqrt{\Delta^2H_BI_E}$ & $\sqrt{\Delta^2H_BI_E}$ & $\sqrt{\Delta^2H_BI_E}$ \\[.5em]
\hline  \\[-.5em]
\end{tabular}
\vspace{-4mm}
\end{center}
\begin{flushleft}
*$p=\sin^2(\lambda t)$.\\
**Note, at $t_0$ and $t_f$ the state of the system is always a product state regardless of the form of $H_C$.
\end{flushleft}
 \label{table:comparison_Hamiltonians}
\end{table}

It is easy to see that all these Hamiltonians evolve the initial ground-state state to the highest energy state $\ket{1}^{\otimes N}$. Therefore, the stored energy coincides with the capacity at time $\lambda t_f=\pi/2$, as per Observation \ref{thm:capacity}. Power is the same for all the cases, and the bound \eqref{eq:main_thm} is always saturated. Furthermore, for these cases, the evolution speed in state space and energy eigenspace coincide, as $I_E = 4\Delta H_C^2$ (see \tabref{table:comparison_Hamiltonians}). 
 As can be seen in Fig.~\ref{fig:Glob_par}, without any normalization constraints, $I_E$ scales linearly with $N$ in the \textit{parallel} case, as there are many energy levels involved in the evolution. The situation is drastically different in the \textit{global} case, where only the groundstate and maximally excited energy levels participate, leading to an $N$-independent scaling of $I_E$. The \textit{hybrid} case presents an intermediate behavior.
 
The quantum enhancement in power in these paradigmatic examples was understood \cite{1367-2630-17-7-075015,Campaioli2017} with the help of certain normalization criteria, as explained in \secref{sec:Quantum_adv_power}. The approach in \cite{Campaioli2017}, for instance, imposes a linear scaling of $\Delta H_C^2$ with $N$, and one should perform the normalization $H_C^{\#}\rightarrow \sqrt{N}H_C^{\#}$. With this criterion, the global charging is $\sqrt{N}$ times faster than the parallel one. On the other hand, under the constraint of $\norm{H_C}$ scaling as $N$, imposed in Ref.~\cite{1367-2630-17-7-075015}, one should perform the normalization $H_C^{\#}\rightarrow {N}H_C^{\#}$. Within such norm rescaling, the global charging performs $N$ times faster than the parallel one.
 
Instead, the approach presented in the previous sections allows us to discuss the role of the entanglement between the cells in the power of these models, without the need of relying to a normalization constrain, and in a quantitative manner. It is easy to see that the \textit{global} (\textit{r-hyrbid}) Hamiltonians leads to a battery state that will have $N$($r$)-qubit entanglement:
\be 
\begin{split}
\ket{\Psi}_\text{\textit{global}}=&\sqrt{1-p} \ket{0}^{\otimes N}+\sqrt{p}\ket{1}^{\otimes N},\\
\ket{\Psi}_\text{\textit{hybrid}}=&\otimes_{j=0}^{q-1}\ket{\Psi(t)}_j,\\
 \ket{\Psi}_j=&\sqrt{1-p} \otimes_{i=1}^r \ket{0}_{qj+i}+\sqrt{p}\otimes_{i=1}^r\ket{1}_{qj+i}.
\end{split}
\ee
Notice (compare rows 7 ad 8 of Table \ref{table:comparison_Hamiltonians}) that this amount of $k$-qubit entanglement impacts the super-linear scaling factor of $\Delta^2H_B$, and thus also the one of power. In the middle of the evolution, for $t=\pi/4$ the states are exactly in the GHZ form, and the bound of Eq.~\eqref{eq:main_thm_ent} is saturated.

\begin{figure}
\includegraphics[width=1\linewidth]{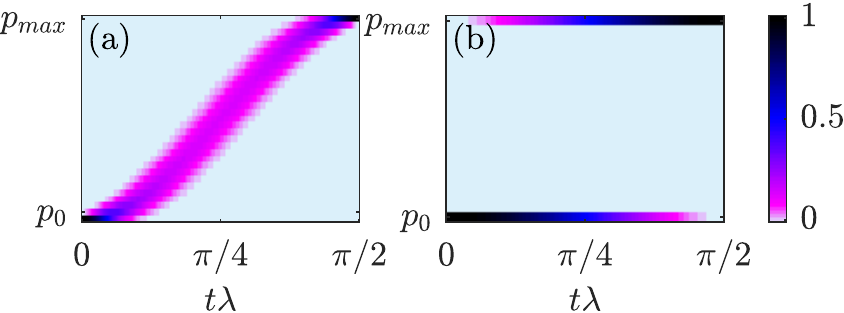}
\caption{Dynamics of energy levels $p_k$, according to their definition (\ref{eq:def_pk}), during   a parallel (a) and a global (b) charging process. Here one can see the contribution of each energy level (y-axis) to the total energy, as a function of time. One can see absolute non-locality in energy space for the global case, where there is a coexistence in time of the highest and lowest energy level,  and locality of the parallel one, where at a given time only levels that are close contribute to the total energy. }
\label{fig:Glob_par}
\end{figure}

\section{Specific spin models \label{sec:BoundsSpinModel}}

In this section, we study our previously derived bounds in specific spin-$\frac{1}{2}$ models, effectively described in terms of qubits. The models, which can in principle be realized experimentally (see, for instance, Ref.~\cite{MaciejBook}), are (i) integrable spin models in 1D with ultracold fermionic atoms, (ii) the Lipkin-Meshkov-Glick (LMG) model with ultracold atoms or atoms near nanostructures, and (iii) Dicke model with ultracold ions, BEC in an optical cavity, or cavity circuit QED. While it is true that some of these spin models have already been presented in the literature \cite{1367-2630-17-7-075015, PhysRevLett.120.117702, PhysRevA.97.022106} as candidates for an experimentally realizable quantum battery, we use our formalism to systematically analyze them. More specifically, for each model we discuss the impact of both the evolution speed, quantified by $I_E$, and the smartness of the path undergone in the power of the battery, related to $\Delta H_B^2$ and $k$-qubit(spin) entanglement, hence clarifying the origin of possible speed-ups. Furthermore, we also study the amount and quality of stored energy in the final battery state.

In all the cases we consider that the quantum battery is a chain of $N$ spin-$\frac{1}{2}$ cells.  We work in the local basis of Pauli matrices for each spin, i.e., any local operator acting on $j$-th spin can be expressed in terms of \{$\sigma_x^j,\sigma_y^j,\sigma_z^j,\mathbb{I}$\}, and define our battery Hamiltonian as
\be \label{eq:SpinBattery}
H_B=\frac{1}{2}\sum_{j=0}^{N-1}\sigma_z^j.
\ee
We consider that the spin system is initially in the ground state of the battery Hamiltonian $H_B$. We now look at different charging models.

\subsection{Integrable spin models}

Here we consider a general class of charging Hamiltonians in 1D of the form
\be \label{eq:hamiltonianSpin}
 \begin{split}
 H_{JW} &= H_B +\frac{1}{2}\sum_{\substack{j=0 \\ m=1}}^{N-1}\left[(\lambda_m+\gamma_m)\sigma^{j}_x(\otimes_{l=j+1}^{j+m-1}\sigma^l_z)\sigma^{j+m}_x+\right.\\
 &+\left.(\lambda_m-\gamma_m)\sigma^{j}_y(\otimes_{l=j+1}^{j+m-1}\sigma^l_z)\sigma^{j+m}_y\right],
 \end{split}
 \ee
that can be diagonalized exploiting the Jordan-Wigner (JW) transformation. Above, we have implicitly assumed translational invariance and periodic boundary conditions. The above family of Hamiltonians includes the 1D transverse field Ising model and XY model with a transverse field if we limit the interaction range to nearest-neighbors only. The dynamics of these spin systems, parametrized by ($\lambda_m,\gamma_m$), can be easily solved (see Appendix~\ref{App:A}) by a mapping through the aforementioned  JW transformation to a fermionic chain, followed by a Fourier transformation of the fermionic operators exploiting the translational invariance, and a final Bogoliubov transformation of the Fourier transformed fermionic operators. 

What is important in our discussion of these systems, acting as quantum batteries, is that in the fermionic picture they present a local structure in momentum space due to their translational invariance. This means that the Hilbert space structure of the problem can be expressed as
\be 
\mathcal{H}=\bigotimes\mathcal{H}_{k,-k},\label{eq:hilbertstructure}
\ee
where $k$ labels the quasimomentum, and thus these models are very similar to an \textit{hybrid} model with $r=2$ (see Table \ref{table:comparison_Hamiltonians}), which is close to the \textit{parallel} case, as there is only two-particle entanglement between $(k,-k)$ modes.  We see this fact in that we are able to write all the relevant dynamical quantities (see Appendix~\ref{App:A}) as a sum of independent contributions from each $(k,-k)$ subspace. For instance,
\be 
\begin{split}
E(t)&=\sum_{k\in \# \text{BZ}} \varepsilon_k(t),\\
P(t) &= \sum_{k\in \# \text{BZ}} \dot{\varepsilon}_k(t),\\
\Delta H_B(t)^2&=\sum_{k\in \# \text{BZ}} {\varepsilon}_k(t)(2-{\varepsilon}_k(t)),\\
\Delta H_{JW}^2&=\sum_{k\in \# \text{BZ}} \sin^2(\theta_k)\omega^2_k,
\label{eq:jordan}
\end{split}
\ee
where $\varepsilon_k$, $\theta_k$, and $\omega_k$ depend on the parameters of the model and $\#BZ$ refers to the reduced Brillouin zone of the fermionic chain (see Appendix~\ref{App:A}). Notice that, as the size of the reduced Brillouin zone is proportional to $N$, the quantities appearing in \eqnsref{eq:jordan} have a natural linear scaling with $N$. 

Comparing the \textit{parallel} and \textit{hybrid} rows of Table ~\ref{table:comparison_Hamiltonians}, we observe that these integrables models can present a quantum advantage by at most a factor of 2, under the criteria that the speed factor $I_E$ is kept at a constant value. This result is in agreement with the result of Ref.~\cite{Campaioli2017}, where it was shown that, for a fixed $\Delta H_{JW}^2$, the maximum quantum advantage is given by the order of interaction of $H_{JW}$.

\begin{figure}
  \includegraphics[width=1\linewidth]{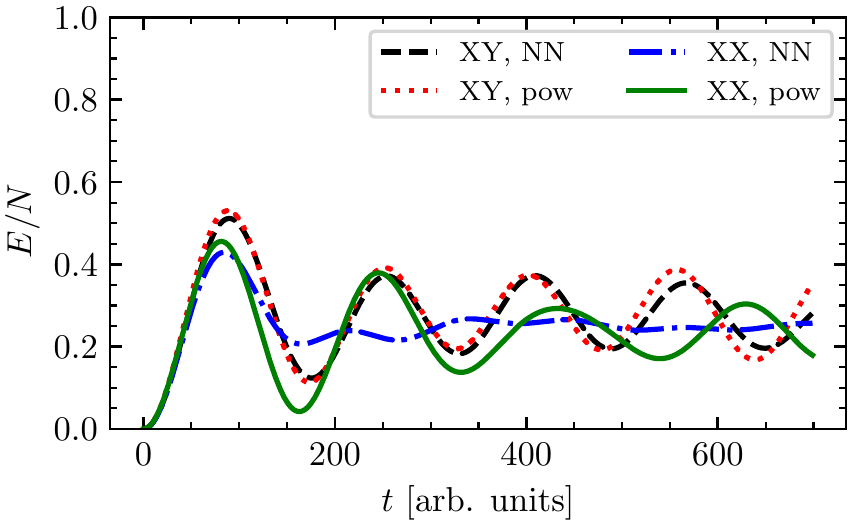}
\caption{Dynamics of the stored energy for different spin integrable models described by the Hamiltonian \eqnref{eq:hamiltonianSpin}. The legend indicates to which model corresponds each line. We have studied the XY (XX) model that correspond to  $\lambda_m=0$ ($\lambda_m=\gamma_m$). For each of these models we have studied the nearest-neighbor (NN) case, where $\gamma_1=1$ and $\gamma_{m\neq 1}=0$, and also the power law (pow) case, where $\gamma_m = m^{-2}$. We have fixed the system size to $N=20$ spins and normalized the stored energy (Y-axis) accordingly, such that it is bounded to 1. We observe that for all the models the maximum stored energy is about $50\%$ of the total due to the desynchronization between the different modes that contribute to this quantity.}
\label{fig:dynamicsfermions}
\end{figure}

 \begin{figure}
  \includegraphics[width=\linewidth]{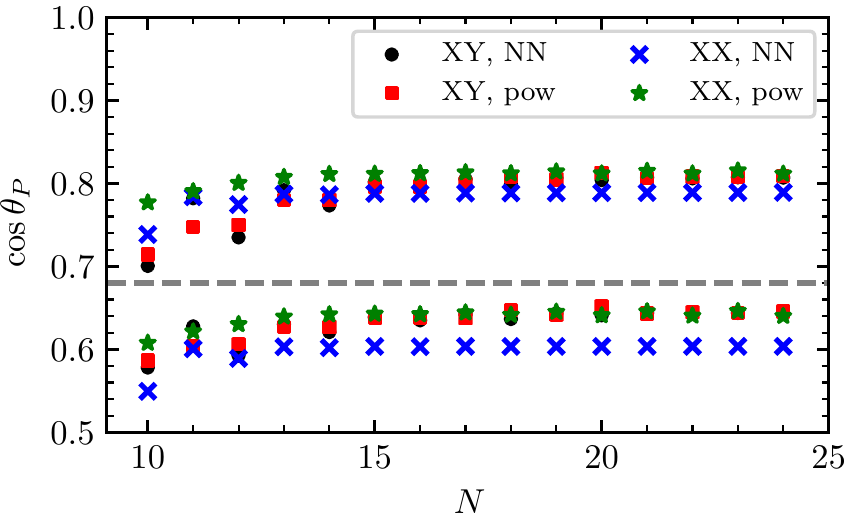}
\caption{Over the grey dashed line, we plot the value of $\cos{\theta_P}\coloneqq P/{\sqrt{\langle\Delta H_B^2\rangle_{\Delta t} \langle I_E\rangle_{\Delta t}}}$ as a function of the number of spins $N$ for the models presented in \figref{fig:dynamicsfermions}. We observe a fast saturation to an approximate 0.8 value. Below the grey dashed line, we plot the same quantity but substituting $\langle I_E\rangle_{\Delta t}\rightarrow 4\Delta^2H_{JW}$, and it saturates to an approximate 0.6 value. This shows how lower is our bound obtained using the Fisher information in energy eigenspace instead of the speed in Hilbert space for these particular models.}
\label{fig:costhetapfermions}
\end{figure}

The only collective effect we are left with is the synchronization between independent modes [two producible states of $(k,-k)$ particles]. This phenomenon has a direct consequence: it limits the capacity and quality of stored energy, as in general the set of energies $\varepsilon_k(t)$ will not be maximized simultaneously. This feature is discussed in \figref{fig:dynamicsfermions} for particular choices of $(\lambda_m,\gamma_m)$. It is also important to notice that, in general, the system will not be in an energy eigenstate when it reaches maximum capacity. However, the energy uncertainty associated with the final state will be negligible in the thermodynamical limit, as $\Delta H_B(t_f)/E(t_f) \sim 1/\sqrt{N}$.

Moreover, one can compute the Fisher information in energy eigenspace for these models (see Appendix.~\ref{App:B}) to see how tight our bound is for different choices of parameters and as a function of $N$, a result which is shown in \figref{fig:costhetapfermions}.

\subsection{Lipkin-Meshkov-Glick model}

Another class of charging Hamiltonians that we consider is based on the LMG model \cite{1965NucPh..62..188L}, which allows for two-body spin interactions with an infinite range. Namely, the charging Hamiltonian is given by
\begin{equation}\label{eq:LMG_ham_local}
     H_{LMG}=  \frac{\lambda}{N}\sum_{i<j}(\sigma_x^i\sigma_x^j+\gamma\sigma_y^i\sigma_y^j ) + \frac{1}{2}\sum_{j=0}^{N-1}\sigma_z^j,
\end{equation}
where $\lambda$ is the coupling strength, $\gamma$ the anisotropy parameter, and the factor $\frac{1}{N}$ is included in the model in order to have a finite interaction energy per spin in the thermodynamic limit. For the infinite range Ising model ($\gamma = 0$), $H_{LMG}$ is analog to the so-called twist-and-turn Hamiltonian
\cite{PhysRevA.92.023603}. There $\lambda$ mimics the twisting parameter, and the linear term coming from $H_B$ is a rotation around the z-axis. Using the components of the total spin operator $\boldsymbol{J}\coloneqq (J_x,\ J_y,\ J_z)$, with
\be 
J_\alpha = \sum_{j=0}^{N-1} \frac{\sigma_\alpha^j}{2}, \ \ \ \ (\alpha = x,\ y,\ z),
\ee 
the LMG Hamiltonian of \eqnref{eq:LMG_ham_local} can be rewritten as
\begin{equation}\label{eq:LMG_ham_total}
H_{LMG} = \frac{\lambda}{2N}\Big{[}(1+\gamma)\Big{(}J_+J_-+J_-J_+-N\Big{)}+(1-\gamma)\Big{(}J_+^2+J_-^2\Big{)}\Big{]} + J_z,
\end{equation}
where we have introduced the ladder operators $J_+$ and $J_-$, that are related to the total spin operators by $J_x $=$ \frac{1}{2}(J_+$+$J_-)$, and $J_y$=$\frac{1}{2i}(J_+$-$J_-)$.  Notice that, in the total spin notation, the battery Hamiltonian reads $H_B=J_z$. Note also that $\boldsymbol{J}$ is a constant of motion, i.e., $[\textbf{J},H_{LMG}]=0$, and that the initial state lies in the maximum spin sector. These properties effectively reduce the size of the Hilbert space of the problem, that scales only linearly with $N$, instead of the exponential $2^N$ scaling of the total Hilbert space described by local Pauli matrices.

\begin{figure}[h!]
\includegraphics[width=1\linewidth]{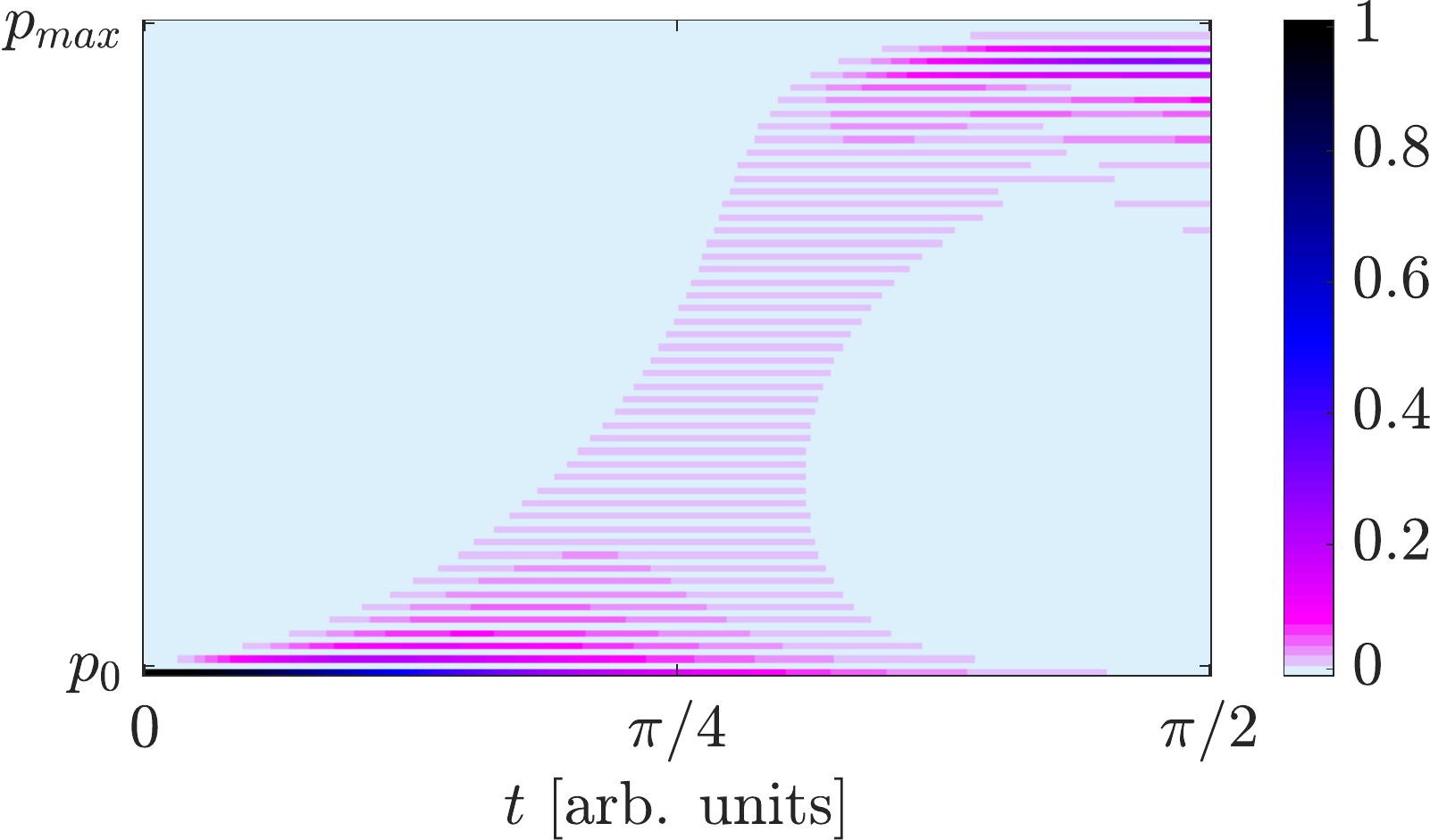}
\caption{Dynamics of the energy levels during the charging process based on the LMG model for $\lambda = 5$. Due to the structure of $H_{LMG}$, only every second energy level starting from the initial state is occupied during the evolution.}
\label{fig:lmg}
\end{figure}

Let us first discuss some general properties of the Hamiltonian \eqref{eq:LMG_ham_total} that mainly affect the capacity properties of the LMG model. First, note that deposition of energy into the battery only occurs if $\gamma \neq 1$, and the maximum capacity is achieved for $\gamma=-1$ [see Figs.~\ref{fig:LMG_scaling_C}(c) and ~\ref{fig:LMG_scaling_C}(d)]. This comes from the fact that the mixed terms in the Hamiltonian (i.e. $J_+J_-$ and $J_-J_+$) are diagonal operators in the energy eigenbasis, and hence they only contribute to the free evolution and not to the charging process.  

Second, there are two regimes depending on the value of the parameter $\lambda$. The strong-coupling regime is defined by $\lambda \geq 1$, whereas in the weak-coupling regime $\lambda < 1$. In the latter, the LMG model leads to very poor charging properties, as the maximum stored energy tends to zero in the thermodynamic limit. This is due to the fact that the ground state of the battery Hamiltonian $H_B$ (i.e., the initial state) is also an eigenstate of $H_{LMG}$, when $N\to \infty$ \cite{PhysRevA.69.022107}. Therefore, for our discussion, we will focus on the strong coupling regime.

In studying power for LMG model, a first quantity of interest is the variance of the charging Hamiltonian
\begin{equation}
\Delta H_{LMG}^2 = \frac{\lambda^2}{2}(1-\gamma)^2 + \textit{O}\Big{(}\frac{1}{N}\Big{)}.
\label{eq:LMG_lim}
\end{equation} 
We see that these variance does not scale with $N$, implying that $I_E$ cannot scale with $N$ either, as $I_E \leq 4\Delta H_{LMG}^2$. Hence, our bound (\ref{eq:main_thm}) tells us that any scaling of the power with $N$ can only be associated with $\Delta H_B^2$. Note, the correlations are expected to be enhanced in this model due to the long-range nature of the interactions between spins, and also because it does not have any ``hidden'' local structure in energy, unlike the previous case of integrable spin models. Nevertheless, they are bounded to scale as $N^2$ at most (see \secref{sec:fluct}). As a consequence, power can scale at maximum linearly with $N$ for the LMG charging Hamiltonian, and no $N$-dependent speed-up is possible.

\begin{figure}
\includegraphics[width=1\linewidth]{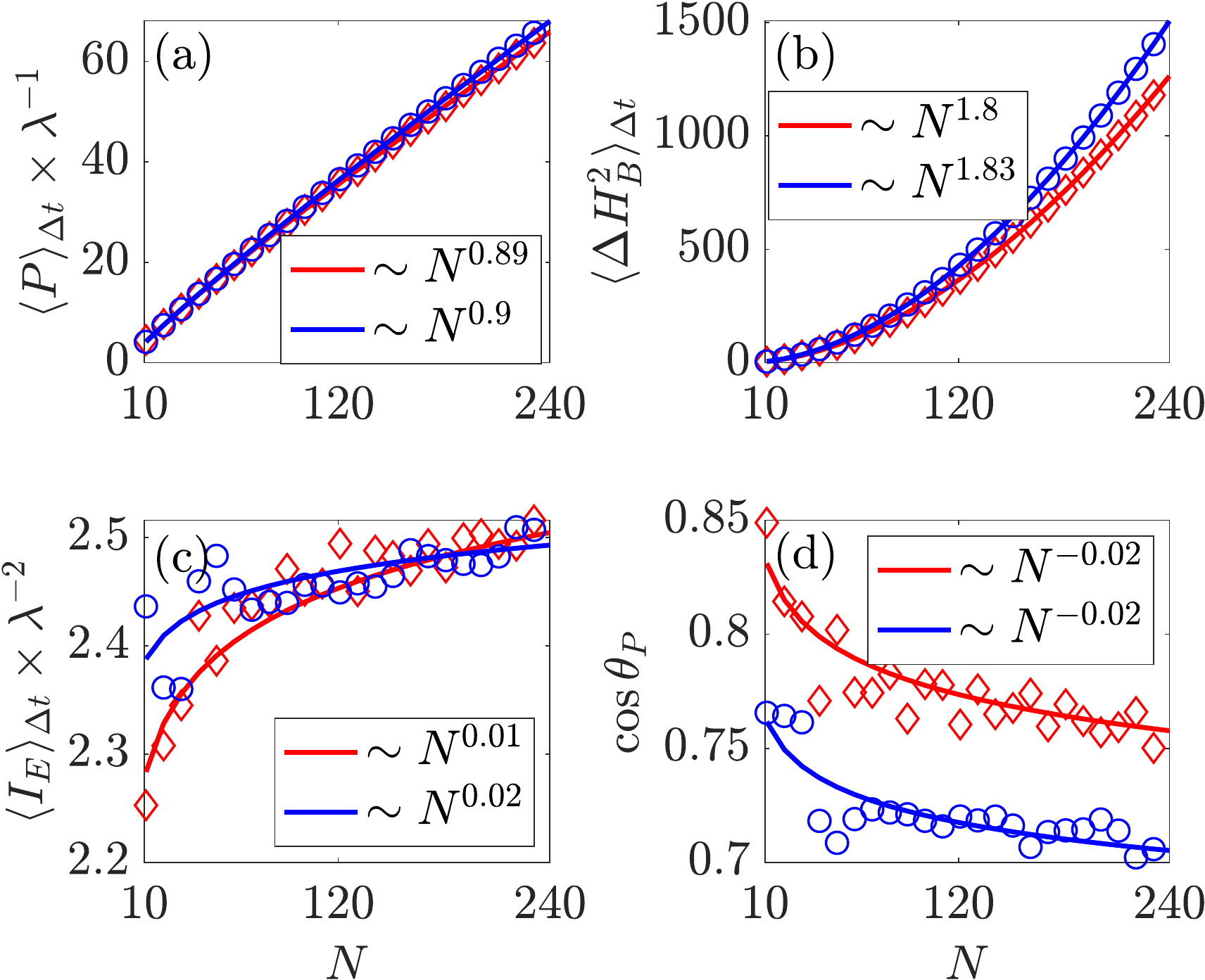}
\caption{The figures consider time-averaged quantities of the LMG model, which are relevant in the study of power, as a function of the number of spins $N$. The final time has been chosen as the one for which the capacity is maximized. Blue color is used for $\lambda=20$, whereas red color is used for $\lambda=5$. Quantities that carry units of time (i.e., power and Fisher information) have been renormalized with the coupling strength $\lambda$. The legends indicate the scaling with $N$ of the different curves.}
\label{fig:scaling_LMG}
\end{figure}

To make a quantitative analysis of the scaling of such correlations with the number of spins $N$, and also study the tightness of our bound on power, we have solved its dynamics for two values of the coupling strength in the strong-coupling regime $(\lambda = 5,\ 20$), and a fixed value of $\gamma=-1$.

First, we would like to draw attention to \figref{fig:lmg}, where one can visualize the evolution of the LMG battery in the eigenspace of $H_B$. If one compares this evolution with the ones of the paradigmatic cases (see \figref{fig:Glob_par}), one observes that the LMG battery has some common properties with the global charging case, as there appears entanglement between the states that are far away in energy during evolution. This enhances the battery variance $\Delta H_B^2$, as explained below. However, in the LMG battery, there are many energy levels involved in the charging process, contrary to what happened in the global charging case, where only the lowest and highest energy states participate.

In \figref{fig:scaling_LMG}(b), one can see the enhancement of the time-averaged energy variance $\Delta H_B^2$ in the LMG battery. Even though this variance does not saturate the bound of $N^2$ scaling, it definitely scales super-extensively, as $\sim N^{1.8}$. As per Eq.~\eqref{eq:var_ent}, this scaling means that the battery is in a highly $k$-qubit entangled state during the charging process. In \figref{fig:scaling_LMG}(c) one observes that the Fisher information $I_E$ does not vary with the system size, a result that is in agreement with the analytical formula of \eqnref{eq:LMG_lim}. 
Finally, we note that, for the LMG battery in the strong-coupling regime, the bound (\ref{eq:main_thm}) is tight at order $N$ (see \figref{fig:scaling_LMG}d), and power scales approximately linearly with $N$ [see \figref{fig:scaling_LMG}(a)], inheriting the scaling of $\Delta H_B$. We also remark that increasing the driving parameter $\lambda$ decreases the tightness of the bound.

\begin{figure}[t]
  \includegraphics[width=1\linewidth]{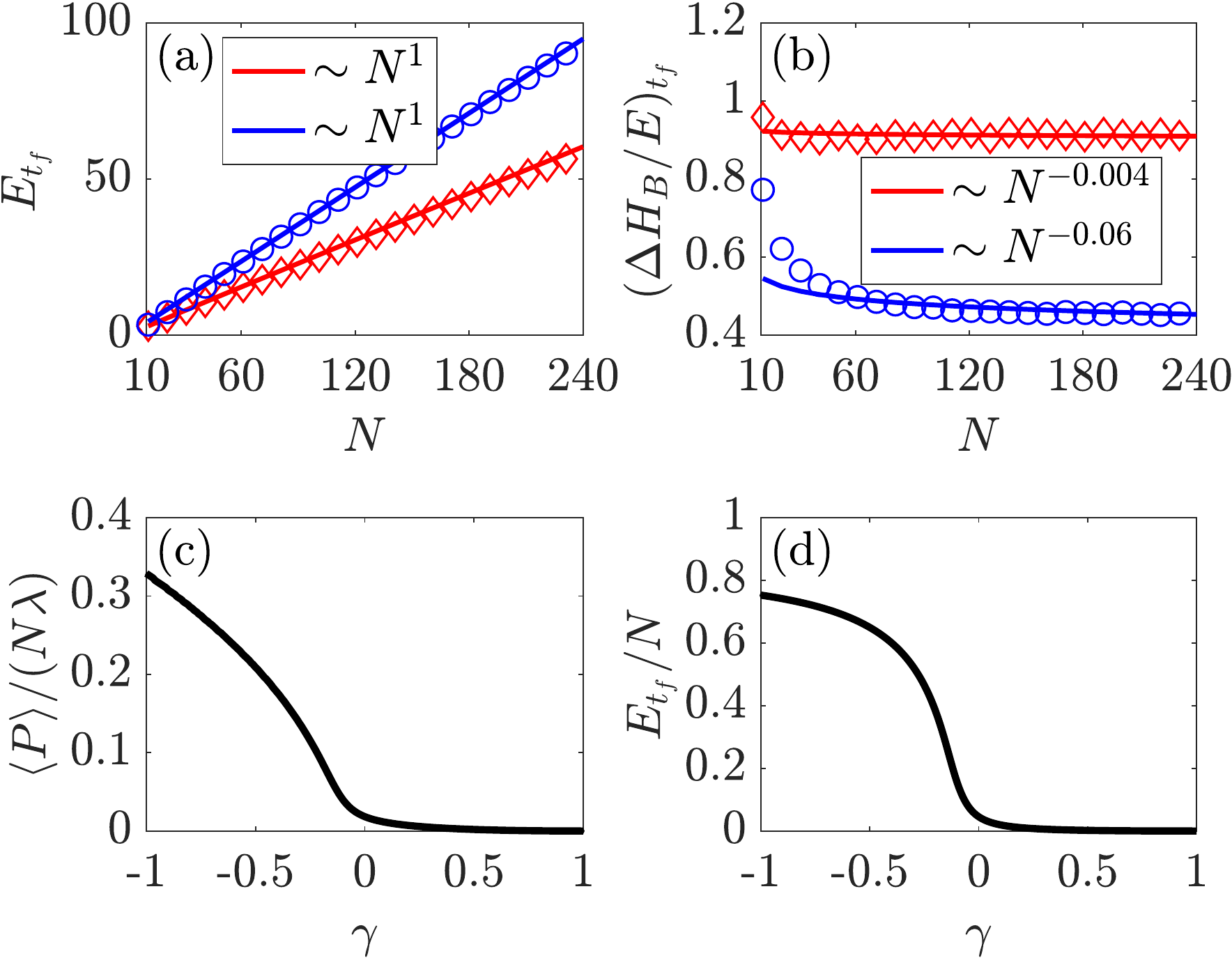}
    \caption{Capacity properties of the LMG battery. In figure (a) and (b) one can see the expected scaling of the capacity and the final relative energy variance as a function of the system size. The blue color is used for $\lambda=20$, whereas red color is used for $\lambda=5$, and the legends indicate the scaling law. In (c) we plot the dependence with the anisotropy parameter $\gamma$  of both the capacity and power. For this plot, we used a system of $50$ spins and set $\lambda=5$.}
\label{fig:LMG_scaling_C}
\end{figure}

At this point, it is interesting to compare the conclusions obtained for the LMG model within the scope of the bound (\ref{eq:main_thm}) and the analysis on speed-ups under certain normalization criteria. In \cite{Campaioli2017}, it was predicted that, under a fixed linear scaling with $N$ of $\Delta H_C^2$, power could scale super-extensively in batteries with a large participation number, e.g. the present LMG model. In \secref{sec:Quantum_adv_power}, we showed that if one imposes a linear scaling with $N$ of the Fisher information $I_E$ (and thus of $\Delta H_C^2$), speed ups in power are directly related to the enhancement of the battery variance $\Delta H_B^2$. We have seen that, in the LMG battery, such variance is indeed highly enhanced and it translates into power, as the bound (\ref{eq:main_thm}) is tight for this model. However, this enhancement in power does not come from the speed of evolution, as the Fisher information $I_E$ remain invariant with the system size. Thus the normalization criterion under which such enhancement in power was predicted does not apply.

Hence, the example of the LMG model shows the importance of both $\Delta H_B^2$ and $I_E$, and it stresses the two-fold origin of quantum advantage in power.

To conclude with the LMG model, let us now briefly discuss the energy capacity and its variance, quantitatively.  In \figref{fig:LMG_scaling_C}a, we see that in the strong-coupling regime the energy stored scales linearly with the number of cells $N$, as expected. However, in \figref{fig:LMG_scaling_C}b we see that there is no decay of the relative variance of the final stored energy in the large $N$ limit. This presents an important problem for a deterministic extraction of the stored energy, and it means that the super extensive energy variance $\Delta H_B^2$ that build-up during the charging process to enhance power, does not disappear in the final state.

\subsection{Dicke model}\label{sec:Dicke}

A quantum battery can also be constructed by placing an array of spins inside an optical cavity \cite{DickePhysRev.93.99}. This particular model has been studied in Ref.~\cite{PhysRevLett.120.117702}, where a collection of $N$ spins interact with a cavity field mode. In this section, we reconsider this battery model and study it in the light of our bounds presented before. The paradigm assumed here is qualitatively different from the previous ones, as the system that provides the energy to the battery  (i.e., the charging agent) is explicitly considered. The battery Hamiltonian is still given by $H_B = J_z$. The charging Hamiltonian includes the free evolution of both the spins and the cavity and a linear interaction between them. It reads
\be \label{eq:dickehamiltonian}
H_{DK} = J_z+\adop\aop+\frac{2\lambda}{\sqrt{N}}J_x(\adop+\aop), 
\ee
where $\adop (\aop)$ are the usual creation (annihilation) operators of cavity photons, satisfying $[\aop,\, \adop]=\mathbb{I}$, and the macro-spin notation is adopted as previously. In contrast to the convention used in \cite{PhysRevLett.120.117702}, here we include the factor $\frac{1}{\sqrt{N}}$ in the coupling in order to have a well defined thermodynamical limit \footnote{The thermodynamical limit is considered on the cavity-spin system as a whole. This implies that if one adds spins the cavity length $L$ should also increase to allocate them, keeping the density $N/L$ constant, or in other words $N\propto L$. On the other hand, the coupling term in the Hamiltonian (\ref{eq:dickehamiltonian}) has its origin in the electric-dipole interaction $\sum E_j \cdot d_j$, where $d_j$ is the dipole of each spin, and $E_j$ the electric field at the spin position. Of course, the individual dipole operators $d_j$ do not carry any scaling with $N$, and they are simply proportional to $\sigma^j_x$, but the electric field quantized inside a cavity of length $L$ carries a normalization factor $1/\sqrt{L}$. Thus $L$ and $N$ need to be proportional.}, for $N\to\infty$. We consider that in the initial state the spins (i.e., the battery) are in the ground state of $H_B$, and the cavity is in the eigenstate of the photon number that contains $N$ photons. 

A first observation is that one can analytically compute
\be \label{eq:dickefluct}
\Delta H_{DK}^2=2\lambda^2(2N+1)\, ,
\ee
and therefore one sees that $I_E$ is bounded to scale linearly with $N$ (parallel case scaling) at most. One is then left with $\Delta H_B^2$ as the only quantity appearing in our bound (\ref{eq:main_thm}) that could scale super-extensively in order to enhance the power scaling.  Notice that this is only possible if a normalized coupling $\lambda/\sqrt{N}$ is introduced.
The non-normalized case, considered in Ref.~\cite{PhysRevLett.120.117702},  can be experimentally engineered \cite{PhysRevLett.103.083601} if one considers that $N$ ranges from $1$ to some large but finite value. In such scenario, we have $\Delta \tilde{H}_{DK}^2=2N\lambda^2(2N+1)$. 
Then, the power can scale super-extensively without the need of a super-extensive scaling in of $\Delta H_B^2$, i.e., the spins do not need to explore highly correlated subspaces.

\begin{figure}[h!]
\includegraphics[width=1\linewidth]{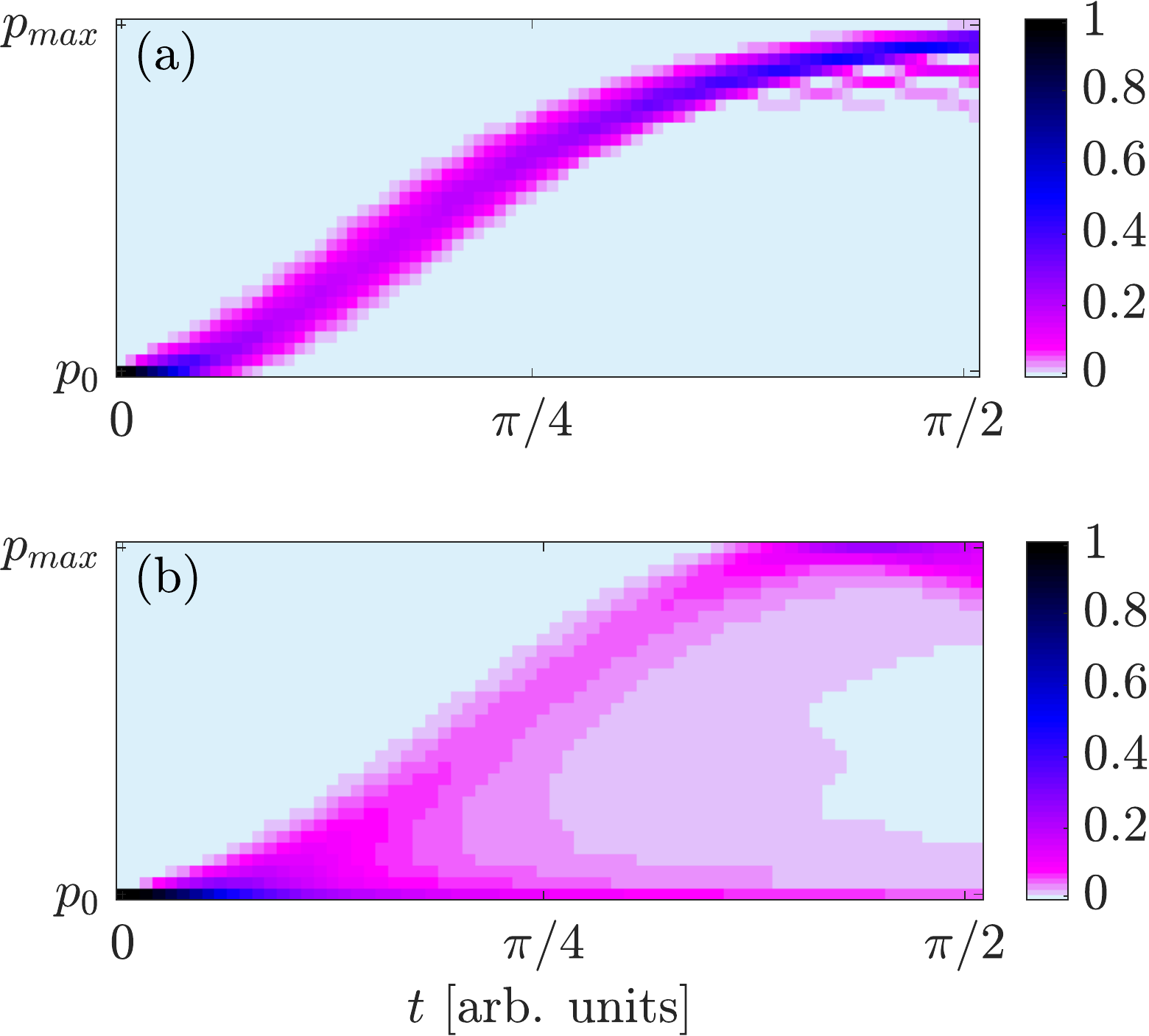}
\caption{Dynamics of the energy levels $p_k$ during the charging process based on Dicke model for $\lambda$=$0.01$ (a) and $\lambda$=$0.5$ (b). In (a) we see that for the weak-coupling regime the behavior is similar to the parallel charging case. On the other hand, (b) shows an intermediate behavior between parallel and global charging.}
\label{fig:dickie}
\end{figure}

\begin{figure}[h!]
\includegraphics[width=\columnwidth]{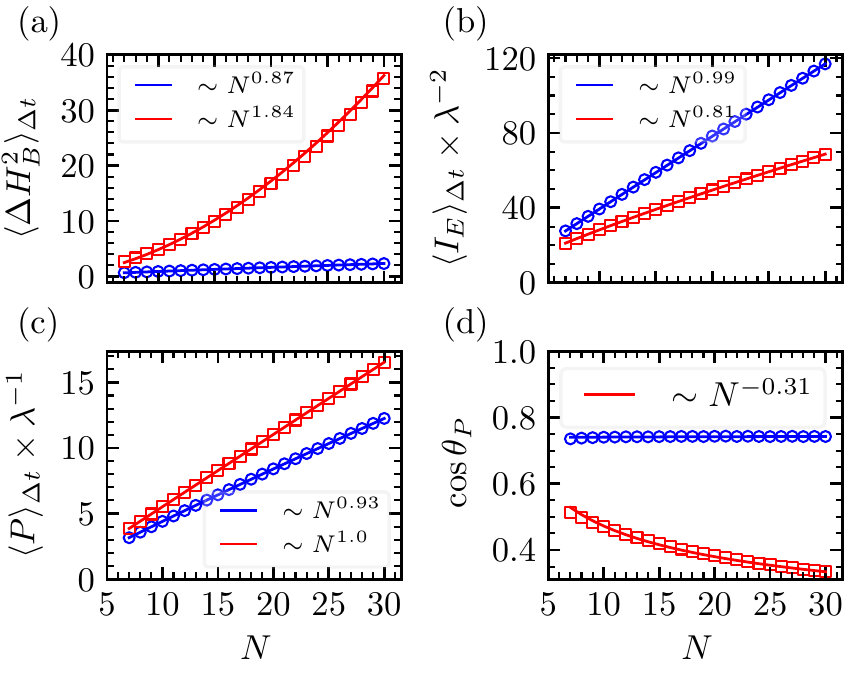}
\caption{The figures consider time-averaged quantities of the Dicke model, which are relevant in the study of power, as a function of the number of spins $N$ inside the cavity. The final time has been chosen as the one for which the capacity is maximized. The blue color (circles) is used for the weak-coupling regime ($\lambda=0.01$), whereas the red color (squares) is used for the strong-coupling regime ($\lambda=0.5$). Quantities that carry units of time (i.e., power and Fisher information) have been renormalized with the coupling strength $\lambda$. The legends indicate the scaling with $N$ of the different curves.}
\label{fig:Dicke2}
\end{figure}

\begin{figure}[h!]
\includegraphics[width=\columnwidth]{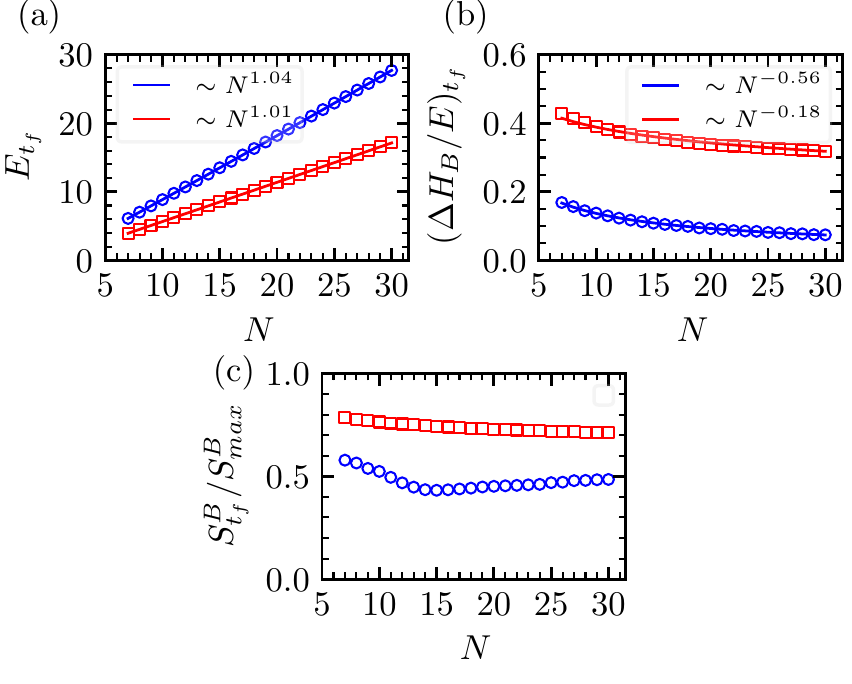}
\caption{The figures show the relevant quantities for the capacity of the Dicke model evaluated at the final time $t_f$, as a function of the number of spins $N$ inside the cavity. Blue color (circles) is used for the weak-coupling regime ($\lambda=0.01$), whereas red color (squares) is used for the strong-coupling regime ($\lambda=0.5$). In (c) we plot the entanglement entropy $S^B\coloneqq -\Tr(\rho_B\log_2\rho_B)$ of the battery reduced density matrix, defined as $\rho_B = \Tr_c(\rho)$, where the trace is performed over the cavity degree of freedom. We normalize this quantity by its maximum value  $S^B_{max}=\log_2(N+1)$.}
\label{fig:DickeCapacity}
\end{figure}
In order to be able to make stronger statements about this system, we have numerically solved the dynamics for $\lambda =0.01$ ($\lambda =0.5$), which are representative examples of the weak (strong) coupling regime of the Dicke model. The qualitative difference between these two regimes can be pictorially seen in \figref{fig:dickie}, where we see that the dynamics of energy levels are very local (and thus similar to the parallel charging case) in the weak-coupling regime, whereas in the strong coupling regime they exhibit highly non-local properties, with levels well separated in energy get entangled during evolution. 

Let us now turn to the quantitative discussion of these two regimes. A first result [see Fig.~\ref{fig:Dicke2}(c)] is that, when including the normalization $\frac{1}{\sqrt{N}}$ in the coupling, the super-extensive behavior of power presented in Ref.~\cite{PhysRevLett.120.117702} disappears, and linear scaling in $N$ (as in the parallel case) is approximately recovered, both in the weak and the strong-coupling regime. In contrast, when analyzed in terms of the quantities appearing in our bound, the strong-coupling regime shows relevant differences with respect to a parallel charging scenario. 

In \figref{fig:Dicke2}b, we see that in both the weak and strong-coupling regime, the time-averaged Fisher information $I_E$ scales approximately linearly with $N$, in agreement with the linear bound set by $\Delta H_{DK}^2$. The differences appear when looking at the time averaged $\Delta H_B^2$ [see \figref{fig:Dicke2}(a)]. While in the weak-coupling regime it scales linearly with ${N}$, in the strong-coupling regime this quantity is enhanced and close to the $N^2$ scaling. This superlinear scaling is clearly associated to the large value of the coupling $\lambda$, that allows for cavity-mediated interactions between the spins, that explore highly correlated subspaces. In fact, as Eq.~\eqref{eq:var_ent} is also valid for mixed states, this super-linear scaling means that the spins (i.e., the battery) are in a mixture of $k$-qubit entangled pure states, with $k\gg 1$. 

Nevertheless, the enhancement of $\Delta H_B^2$ in the strong-coupling regime is not reflected in the scaling of power, as one can see in \figref{fig:Dicke2}(d) that the bound (\ref{eq:main_thm}) is far from being saturated in this regime, i.e., $ \langle P\rangle_{\Delta t}\ll\sqrt{\langle\Delta H_B^2\rangle_{\Delta t} \langle I_E\rangle_{\Delta t}} $, leading to power scaling only linearly with $N$. Remarkably, our bound is tight at order $N$ in the weak-coupling regime.

Let us finally discuss the Dicke model in terms of storage. We observe in \figref{fig:DickeCapacity} that, while in the weak coupling regime the capacity properties are similar than those of a parallel charging, in the strong coupling regime there is a worsening in the quality of the stored energy. This is because the super-extensive energy variance generated during the charging process does not disappear completely in the final state [see \figref{fig:DickeCapacity}(b)], as they decay much slower than $1/\sqrt{N}$. In \figref{fig:DickeCapacity}(c) we also show the significant presence of entanglement between the final state of the battery and the source (i.e., the cavity) in this strong-coupling regime.

\section{Conclusions \label{sec:Discussion}}
For a quantum battery, in which the cells are quantum in nature and the process of charging and discharging could introduce quantum correlation among them, a natural question is how to harness the quantum advantages such that it outperforms a classical battery. The aim of this work is to find physically meaningful quantities, in relation to the quantum nature of cells and processes, and bounds to characterize a quantum battery. The important properties of a battery are: \emph{capacity}, i.e., the amount of energy it can store and deliver; \emph{power}, that signifies how fast a battery can be charged or discharged; and \emph{variance} in the stored energy, which determines the quality of the stored energy and to which extent it can be deterministically accessed. 
For a battery composed of many noninteracting identical quantum cells, the capacity is additive and it is independent of the correlations present in the battery state. While in the case of power, which may not be additive, we expect to see roles of inter-cell correlations leading to certain quantum advantages.

We have derived the capacity, that is, a fundamental bound for the storage of a quantum battery, with the help of the energy-entropy diagram. For a battery with a finite number of quantum cells and a general unitary charging and discharging process, the capacity often does not saturate. However, in the thermodynamic limit (i.e., with a considerably large number of quantum cells) the capacity is saturated. 

While studying the power of a quantum battery, we have considered the evolution of quantum states when it is projected in the eigenspace of the battery Hamiltonian. Such an approach led us to derive a lower bound on power in terms of two battery dependent quantities. One of these quantities is the Fisher information calculated after the battery state is projected in the eigenspace of the battery Hamiltonian, and the other one is the energy variance of the battery. While both these quantities are influenced by the appearance of entanglement like correlations during charging (discharging) processes, the former signifies how fast the process takes place in the energy eigenspace and latter encodes how smart (in terms of path length) the trajectory of evolution is. This approach enables us to characterize quantum advantages arising from two different sources. Moreover, for the case of qubit-based quantum batteries, we have presented an inequality that shows the maximum charging power that a battery can exhibit when it has at most $k$-qubit entanglement, thus establishing a novel interrelation between entanglement and power. The fact that the $k$-producibility is related with the violation of Bell inequalities \cite{PhysRevLett.123.100507, PhysRevA.100.032307} could also lead to further studies, in which the relation of power with nonlocality \cite{PhysicsPhysiqueFizika.1.195} is explored.

In the light of the newly introduced bounds, we have considered several spin models of quantum batteries. A first set of paradigmatic models has served us to illustrate how the saturation of the presented bounds can be achieved, and the typical scalings with the number of cells (that is, the spins) of the relevant quantities. We have also studied a few physically realizable models. While exhibiting a reasonably good behavior in terms of capacity, they are not able to show a super-linear scaling in power. In the case of integrable spin models, despite their interacting nature, the hidden local structure in momentum space makes them very similar to a battery in which the cells are charged independently. For the LMG model, we have shown that quantum entanglement enhances the charging power, but the reduced speed of evolution $I_E$ leads to the same overall scaling as in the case where the cells are charged independently. Finally, we have also studied the Dicke model in two regimes. In the weak-coupling regime, the model is similar to the \textit{parallel} case, while in the strong-coupling regime, there is a large generation of entanglement between the spins. However, our bound is poorly staurated in this latter regime, and entanglement does not contribute in enhancing the power. That is why it exhibits the same scaling as in the \textit{parallel} case.

As a final remark, we believe that our approach and results towards characterizing quantum batteries, in terms of bounds on storage and power, and correct identification of the origins of quantum advantages, will find important applications in its theoretical and technological aspects.

\section*{Acknowledgments}
We acknowledge the Spanish Ministry MINECO (National Plan 15 Grant: FISICATEAMO No. FIS2016-79508-P, SEVERO OCHOA No. SEV-2015-0522, FPI), European Social Fund, Fundaci\'o Cellex, Generalitat de Catalunya (AGAUR Grant No. 2017 SGR 1341 and CERCA/Program), EU FEDER, ERC AdG OSYRIS, EU FETPRO QUIC, and the National Science Centre, Poland-Symfonia Grant No. 2016/20/W/ST4/00314. M.N.B. gratefully acknowledges financial supports from Max-Planck Institute fellowship and from SERB-DST, Government of India, and A.R. thanks supports from CELLEX-ICFO-MPQ fellowship. We also thank M. Polini for fruitful discussions and for drawing our attention to Ref. \cite{PhysRevLett.103.083601}, and an anonymous referee for motivating the finding of Corollary \ref{cor:PowerEnt}.

\appendix

\section{Solution to the dynamics of the integrable spin models \label{App:A}}

Here we recast the Hamiltonian 

\be \label{eq:hamiltonianSpin}
 \begin{split}
 H_{JW} &= H_B +\frac{1}{2}\sum_{\substack{j=0 \\ m=1}}^{N-1}\left[(\lambda_m+\gamma_m)\sigma^{j}_x(\otimes_{l=j+1}^{j+m-1}\sigma^l_z)\sigma^{j+m}_x+\right.\\
 &+\left.(\lambda_m-\gamma_m)\sigma^{j}_y(\otimes_{l=j+1}^{j+m-1}\sigma^l_z)\sigma^{j+m}_y\right],
 \end{split}
 \ee
To diagonalize it we start by mapping it, trough the JW transformation, to a fermionic chain with the battery and charging Hamiltonians in quadratic forms, i.e., 
 \be 
 \begin{split}
 H_B &=\sum_{j} f\dagg_{j}f_{j},\\
 H_{JW}&= H_B+\sum_{\substack{j=0 \\ m=1}}^{N-1}\,\left[\lambda_m(f_jf_{j+m}\dagg-f_j\dagg f_{j+m})+ \right. \\
 +& \left. \gamma_m(f_jf_{j+m}-f_j\dagg f_{j+m}\dagg)\right],
 \end{split}
 \ee
where $f_j\ (f^\dagger_j)$ are the annihilation (creation) fermionic operation at the $j$-th site of the chain, and we have dropped an irrelevant constant from $H_B$. Notice that in the fermionic picture the battery Hamiltonian is the particle number operator, and therefore a charging process occurs trough the creation of particles driven by $H_{JW}$, from the initial vacuum state. It is useful to perform a Fourier transformation of the fermionic operators to bring the battery and charging Hamiltonian into the following form:
\be \label{eq:hbatteryJW}
\begin{split}
H_B=& \sum_{k\in \# \text{BZ}} (f\dagg_{k}f_{k}+f\dagg_{-k}f_{-k}),\\
H_{JW}=&\sum_{k\in \# \text{BZ}} \Psi_k\dagg M_k \Psi_k,
\end{split}
\ee
with the definitions 
\be 
\begin{split}
\Psi_k=(f_k,\ f\dagg_{-k})^T,\\
f_k=\frac{1}{\sqrt{N}}\sum_j e^{-\im kj}f_j,
\end{split}
\ee

Notice that $\#BZ$ stands for the reduced Brillouin zone, that is, the subset of positive $k$'s from the set $k= -\pi+\frac{2\pi}{n}m$, with $m\in [0,N-1]$. $M_k$ is a $2 \times 2$ matrix, which form depends on the parameters of the model and reads

\be 
\begin{split}
M_k&= \omega_k\begin{pmatrix}
	  \cos{\theta_k} & \sin{\theta_k}e^{-i\pi/2}  \\
	\sin{\theta_k}e^{i\pi/2}  &-\cos{\theta_k}
	\end{pmatrix}
\end{split}
\ee

where $\theta_k$ and $\omega_k$ are defined as
\be 
\begin{split}
\label{eq:paramJW}
& \omega_k  = 2 \sqrt{\left[\frac{1}{2}-\sum_m \lambda_m \cos{(km)}\right]^2+\left[\sum_m \gamma_m \sin{(km)}\right]^2},\\
& \sin{\theta_k} = \frac{2\sum_m \gamma_m \sin{(km)}}{\omega_k}.
\end{split}
\ee 

The dynamics can now be solved by performing a simple Bogoliubov transformation. To do so, we define the Bogoliubov fermionic modes as $\Phi_k = U_k\Psi_k$, where $U_k$ is a $2\times 2$ unitary matrix. The matrix $U_k$ is chosen such that the Hamiltonian $H_{JW}$ in \eqnsref{eq:hbatteryJW} becomes diagonal, when rewritten in terms of the newly defined $\Phi_k$ operators (i.e., the matrix $U_kM_kU_k^\dagger$ is a diagonal matrix for all $k$). This matrix reads

\be \label{def:UkXX}
 \begin{split}
 U_k & = \begin{pmatrix}
 e^{-\im \pi/4}\cos(\theta_k/2)  & e^{-\im \pi/4}\sin(\theta_k/2) \\
	  -e^{+\im \pi/4}\sin(\theta_k/2)\ & e^{+\im \pi/4}\cos(\theta_k/2)  
	\end{pmatrix}, 
 \end{split}
 \ee
and the charging Hamiltonian takes the form 

\be \label{def:UkXX}
 \begin{split}
 H_{JW} & = \sum_{k\in \# \text{BZ}}\omega_k\Phi_k^\dagger\begin{pmatrix}
 1  & 0 \\
	  0 & -1  
	\end{pmatrix}\Phi_k. 
 \end{split}
 \ee
 From the last expression, it is easy to see that, in the Heisenberg picture, the dynamics generated by $H_{JW}$ will lead to 
 
 \be 
 \begin{split}
 \Phi_k(t)= &  \begin{pmatrix}
 e^{-\im \omega_k}  & 0 \\
	  0 & e^{+\im \omega_k}  
	\end{pmatrix}\Phi_k(0). 
 \end{split}
 \ee
 
 Finally, one can then transform back to the Fourier transformed fermionic modes, 
 
 \be 
 \Psi_k(t) = U_k^\dagger \Phi_k(t),
 \ee
 
 and compute all the relevant dynamical quantities of the problem.

\be 
\begin{split}
E(t)&=\sum_{k\in \# \text{BZ}} \varepsilon_k(t), \hspace{4mm} P(t)= \sum_{k\in \# \text{BZ}} \dot{\varepsilon}_k(t),\\
\Delta H_B(t)^2&=\sum_{k\in \# \text{BZ}} {\varepsilon}_k(t)(2-{\varepsilon}_k(t)),\hspace{4mm}\Delta H_{JW}^2=\sum_{k\in \# \text{BZ}} \sin^2(\theta_k)\omega^2_k, 
\label{eq:jordan}
\end{split}
\ee

where  $\varepsilon_k(t) =2\sin^2{\theta_{k}}\sin^2{\omega_k t}$.

\section{Fisher information of integrable spin models \label{App:B}}
We want to compute the Fisher information for the Hamiltonians of integrable spin models, considered in Eq.~\eqref{eq:hamiltonianSpin}. To do so, we first notice that in the fermionic picture the coefficient $p_l(t)$ that appears in the definition of the Fisher information is the probability of having $l$ particles at a time $t$. We also notice that $p_l(t)$ will only have a non-zero value for even $l$, as particles need to be created in pairs $(k,-k)$ to conserve momentum. Finally, the local structure of the problem in $k$-space allows us to compute separately for every $(k,-k)$ subspace the probability of being in the vacuum state or in the pair state at a given instant of time. It is easy to see that
\be 
p_{k,-k}(t)=\frac{\varepsilon_k(t)}{2}.
\ee
To compute the Fisher information we use these local probabilities to determine the probability that the hole state contains $l$ particles:
\be 
p_l(t) = \sum_{\{\sigma\}_l} p(\sigma_1)\dots p(\sigma_N)
\ee
where the sum runs over the configurations $\{\sigma\}_l$, which are binary strings, each bit representing the state of a $(k,-k)$ subspace, with 0's (1's) representing the vacuum (pair) state. These configurations are constrained to contain $l/2$ pairs (1's) and 
\be 
p(\sigma_k) =\begin{cases} 
      \varepsilon_k/2 & \text{if }\sigma_k=1, \\
      1-\varepsilon_k/2 & \text{if }\sigma_k=0. 
     \end{cases}
\ee
And the time derivative of the above probability reads
\be 
\dot{p}_l(t) = \sum_{\{\sigma\}_l} \sum_{\sigma_k} \frac{\dot{\varepsilon}_{k}}{2}(-1)^{1+\sigma_k}\prod_{\sigma_{j\neq k}}p(\sigma_j).
\ee

\bibliographystyle{apsrev4-1}

\end{document}